\def\PUBLICVERSION{1}
\documentclass{article}
\usepackage{iclr2027_conference,times}
\usepackage[utf8]{inputenc}
\usepackage[T1]{fontenc}
\usepackage{booktabs}
\usepackage{amsfonts,amsmath,amssymb,bm,mathtools,amsthm}
\usepackage{nicefrac}
\usepackage{microtype}
\usepackage[table]{xcolor}
\usepackage{graphicx}
\usepackage{multirow}
\usepackage{adjustbox}
\usepackage{wrapfig}
\usepackage{xspace}
\usepackage[shortlabels]{enumitem}
\usepackage{subcaption}
\usepackage{float}
\usepackage{needspace}
\usepackage{array}
\usepackage{tabularx}
\usepackage{longtable}
\usepackage{makecell}
\usepackage{fontawesome5}
\usepackage{tikz}
\usetikzlibrary{shapes.geometric, arrows.meta, positioning, fit, calc}
\usepackage{url}
\usepackage{hyperref}

\definecolor{revisionred}{RGB}{180, 0, 0}
\definecolor{bestbg}{RGB}{221, 238, 221}
\definecolor{secondbg}{RGB}{255, 242, 204}
\definecolor{thirdbg}{RGB}{245, 245, 245}
\definecolor{linkbrown}{rgb}{0.55,0.30,0.10}
\definecolor{cvprblue}{rgb}{0.31, 0.59, 0.79}

\newif\ifredline
\redlinefalse

\newcommand{\best}[1]{\cellcolor{bestbg}\textbf{#1}}

\newcommand{\method}{\textsc{Doctor Trigger}\xspace}

\newcommand{\dfh}{image-only detector\xspace}

\newtheorem{proposition}{Proposition}

\definecolor{darkgreen}{rgb}{0.0, 0.5, 0.0}
\hypersetup{
  colorlinks=true,
  linkcolor=cvprblue,
  citecolor=darkgreen,
  urlcolor=blue,
  pdftitle={Doctor Trigger: A Framework for Release-Bound Face Verification},
  pdfauthor={\ifdefined\PUBLICVERSION Chia-Ming Lee, Chia-Yu Lin, Hung-Kai Huang, Yi-Ting Ku, Yu-Chen Liang, Chih-Chung Hsu\else Anonymous Author(s)\fi},
  pdfsubject={Release-bound verification of enrolled face assets},
  pdfkeywords={content credentials, media provenance, face verification, Fourier phase, enrollment records},
}

\title{\textsc{Doctor Trigger}: A Framework for \\Release-Bound Face  \mbox{Verification}}

\ifdefined\PUBLICVERSION
\iclrfinalcopy
\author{%
\textbf{Chia-Ming Lee$^{1,2}$ \quad Chia-Yu Lin$^{3}$ \quad Hung-Kai Huang$^{2}$} \\
\textbf{Yi-Ting Ku$^{3}$ \quad Yu-Chen Liang$^{3}$ \quad Chih-Chung Hsu$^{2}$} \\[3pt]
\normalfont $^{1}$University at Albany, SUNY \quad
\normalfont $^{2}$National Yang Ming Chiao Tung University \\
\normalfont $^{3}$National Cheng Kung University
\vspace{-4mm}
}
\else
\author{Anonymous Author(s)\\
Affiliation\\
\texttt{email}}
\fi

\begin{document}

\maketitle
\ifdefined\PUBLICVERSION\lhead{Preprint}\fi

\vspace{-3mm}
\begin{abstract}
\vspace{-2mm}
Face swapping and generative editing enable realistic facial manipulation, while online redistribution complicates publisher oversight of circulating face images. Existing approaches establish provenance, detect manipulation, recover embedded watermarks, verify image integrity, or disrupt unauthorized face manipulation. These capabilities do not necessarily establish consistency with a specific publisher-enrolled release, limiting the evidence available for verification and human review. To address this gap, we introduce \method, a framework for verifying circulating faces against private publisher-held release records. Before publication, the publisher embeds a weak keyed Fourier-phase signal in each face while preserving visual fidelity and stores its spectral reference. Afterward, verification compares the circulating face with that reference and uses key projection to measure retention of the release-specific signal. Therefore, relation and key evidence jointly help publishers identify suspicious copies for review, potentially reducing the circulation of manipulated images. Experimental results across FF++, DFDC, and Celeb-DF show that spectral agreement separates tested face swaps and generative manipulations from enrolled releases, while the keyed signal distinguishes published versions from their unreleased originals under the evaluated key-preserving benign transformations. \method complements provenance metadata with release-specific face verification and record-linked evidence for further assessment.
\end{abstract}

\vspace{-5mm}\section{Introduction}
\label{sec:intro}

Recent face and head swapping systems produce realistic identity changes~\citep{Wang2025DynamicFace,Kang2025HeadSwapping}, while compression and resizing also alter images as they circulate through online sharing and reposting. For publishers such as news organizations and public institutions, the question is: \emph{does a publicly distributed face still correspond to the specific version that was released?} The publisher can retain release-specific evidence before distribution for later verification.

To support trust in circulating images, C2PA binds signed provenance assertions to content and supports reconnection through soft bindings~\citep{c2pa}. Complementing provenance, recent detectors learn visual-language cues, discrepancies, and large-scale forgery representations~\citep{Guo2025M2F2Det,Yang2025D3,Jung2026SD2}. Reference-based detection further compares regional identity and texture with genuine references~\citep{Ming2026Reference}. However, these capabilities may not establish whether a circulating face remains consistent with a specific publisher-enrolled release.

To address this need, \method uses private records for \textbf{release-bound face verification}: assessing circulating faces against specific publisher-enrolled releases. First, the publisher applies a weak keyed phase displacement before publication and stores a spectral reference. A publisher-operated verification service, such as a newsroom verification desk, then uses the record to assess face consistency and release-signal retention in user-submitted copies. Finally, the resulting status guides routine handling or further assessment, as illustrated in Fig.~\ref{fig:positioning}. The name reflects this workflow: \emph{Doctor} evokes assessment against a retained history followed by triage; \emph{Trigger} denotes the weak phase signal added before publication while preserving visual fidelity.

\begin{figure}[t]
\centering
\vspace{-3mm}
\resizebox{0.98\textwidth}{!}{%
\begin{tikzpicture}[
  font=\normalsize,
  ent/.style={rectangle, rounded corners=3pt, draw=blue!25!gray, thick, fill=blue!3, minimum height=0.95cm, align=center, inner sep=4pt},
  priv/.style={rectangle, rounded corners=3pt, draw=orange!65!brown, thick, fill=orange!7, minimum height=0.95cm, align=center, inner sep=4pt},
  pub/.style={rectangle, rounded corners=3pt, draw=teal!65!black, thick, fill=teal!5, minimum height=0.95cm, align=center, inner sep=4pt},
  flow/.style={-{Stealth[length=6pt]}, line width=1.05pt, gray!85!black},
  privflow/.style={-{Stealth[length=6pt]}, line width=1.5pt, orange!80!brown, dash pattern=on 7pt off 4pt},
  pubflow/.style={-{Stealth[length=6pt]}, line width=1.8pt, teal!65!black, line cap=round, dash pattern=on 0pt off 5pt},
]
\path[draw=gray!45, fill=gray!3, line width=0.8pt, rounded corners=5pt]
 (-1.5,1.45) -- (10.35,1.45) -- (10.35,-1.45)
 -- (6.3,-1.45) -- (6.3,-3.45) -- (-1.5,-3.45) -- cycle;
\path[draw=blue!25!gray, fill=blue!2, line width=0.8pt, rounded corners=5pt]
 (10.65,1.45) -- (18.5,1.45) -- (18.5,-6.05)
 -- (1.0,-6.05) -- (1.0,-3.65) -- (6.6,-3.65)
 -- (6.6,-1.65) -- (10.65,-1.65) -- cycle;
\node[anchor=west, font=\normalsize\bfseries, text=gray!75!black]
 at (-1.25,1.15) {1. Enrollment \& release};
\node[anchor=west, font=\normalsize\bfseries, text=blue!35!black]
 at (10.9,1.15) {2. Verification \& handling};
\node[ent, text width=2.2cm]  (up)    at (0,0)     {\faUserEdit\ Original\\uploader};
\node[ent, text width=2.7cm]  (plat)  at (3.7,0)   {\faServer\ Publisher $P$};
\node[pub, text width=3.1cm]  (rel)   at (8.0,0)   {\faImage\ Released image $\mathbf{I}'$\\ \small \faStamp\ proposed credential};
\node[ent, text width=2.6cm]  (query) at (12.3,0)  {\faRetweet\ Query $\mathbf{J}$\\ \small circulating copy};
\node[ent, text width=2.8cm]  (user)  at (16.7,0)  {\faUsers\ Requester $U$};
\node[priv, text width=3.3cm] (store) at (3.7,-2.7) {\faLock\ Private record $R_{a,i}$\\ \small key, thresholds};
\node[priv, draw=orange!80!brown, line width=1.2pt, rounded corners=5pt,
  minimum height=1.30cm, text width=4.1cm] (ver) at (9.6,-2.7)
  {\textcolor{blue!35!black}{\bfseries\textsc{Doctor Trigger}}\\[-1pt]
   \small private verifier $V$\\[2pt]
   \small face consistency\quad release signal};
\coordinate (emblem) at ($(ver.north west)+(0.14,-0.03)$);
\fill[white] (emblem) circle (0.40);
\draw[orange!80!brown, line width=1.0pt] (emblem) circle (0.38);
\node[inner sep=0pt, text=blue!35!black] at (emblem)
  {\fontsize{17}{17}\selectfont\faUserMd};
\node[pub, text width=4.0cm]  (stat)  at (16.1,-2.7) {\faStamp\ Verification receipt\\ \small status \& coverage};

\draw[flow] (up) -- node[above=3pt, font=\small, align=center] {selects\\faces} (plat);
\draw[flow] (plat) -- node[above=3pt, font=\small] {releases} (rel);
\draw[flow] (rel) --
  node[below=15pt, font=\small, align=center, fill=white, fill opacity=0.7, text opacity=1, inner sep=1pt]
  {\textcolor{green!35!black}{benign processing}\\[-2pt] {\scriptsize\textcolor{green!35!black}{(JPEG, resize, compression)}}\\[-1pt] or \textcolor{red!65!black}{face swap / reenactment}\\[-2pt] \textcolor{red!65!black}{/ generative editing}}
  (query);
\draw[pubflow, rounded corners=3pt] (rel.north) -- (8.0,0.85)
  -- (16.7,0.85) -- (user.north);
\node[above=2pt, inner sep=1pt, font=\small, text=teal!65!black]
  at (8.6,0.85) {release credential};
\draw[flow] (user.west) -- node[below=4pt, font=\small] {submits} (query.east);
\draw[flow, rounded corners=3pt] ([xshift=6mm]query.south) -- (12.9,-2.38)
  -- ([yshift=3.2mm]ver.east);
\draw[privflow] (plat) -- node[right=4pt, font=\small, align=left, text=orange!80!brown] {enrolls\\each face} (store);
\draw[privflow] (store) -- (ver);
\draw[flow] (ver) -- (stat);
\draw[pubflow] ([xshift=6mm]stat.north) -- node[left=4pt, font=\small, align=right, text=teal!65!black] {verification\\result} (user.south);

\node[pub, draw=green!45!black, fill=green!13, line width=1.2pt, text width=3.7cm, minimum height=0.75cm] (confirmed) at (3.7,-5.5)
 {\textcolor{green!35!black}{\faUser\ \textbf{Confirmed}}};
\node[ent, draw=red!70!black, fill=red!10, line width=1.2pt, text width=3.7cm, minimum height=0.75cm] (unconfirmed) at (9.6,-5.5)
 {\textcolor{red!65!black}{\faUser\ \textbf{Unconfirmed}}};
\node[ent, draw=blue!45!gray, fill=blue!12, line width=1.2pt, text width=3.7cm, minimum height=0.75cm] (unavailable) at (16.1,-5.5)
 {\textcolor{blue!40!black}{\faUser\ \textbf{Unavailable}}};
\draw[gray!85!black,line width=1.05pt] (stat.south)
  -- node[right=4pt, font=\small] {decision} (16.1,-3.85) -- (1.35,-3.85);
\draw[flow] (1.35,-3.85) |- (confirmed.west);
\draw[flow] (7.25,-3.85) |- (unconfirmed.west);
\draw[flow] (13.75,-3.85) |- (unavailable.west);
\node[font=\small, align=left, anchor=west, text=green!35!black] at (1.65,-4.65)
 {Release consistency verified;\\retain separate review flags};
\node[font=\small, align=left, anchor=west, text=red!65!black] at (7.55,-4.65)
 {Flag for human review;\\withhold confirmation};
\node[font=\small, align=left, anchor=west, text=blue!40!black] at (14.05,-4.65)
 {Request inputs \& retry;\\if unresolved, human review};

\draw[gray!85!black,line width=1.05pt] (-1.25,-1.25)--(-0.85,-1.25);
\node[anchor=west, align=left, font=\small, text=gray!60!black, inner sep=0pt]
 at (-0.65,-1.25) {workflow};
\draw[orange!80!brown,line width=1.5pt,dash pattern=on 7pt off 4pt]
 (-1.25,-2.10)--(-0.85,-2.10);
\node[anchor=west, align=left, font=\small, text=gray!60!black, inner sep=0pt]
 at (-0.65,-2.10) {private\\publisher only};
\draw[teal!65!black,line width=1.8pt,line cap=round,dash pattern=on 0pt off 5pt]
 (-1.25,-2.95)--(-0.85,-2.95);
\node[anchor=west, align=left, font=\small, text=gray!60!black, inner sep=0pt]
 at (-0.65,-2.95) {public\\signed layer};
\end{tikzpicture}
}
\vspace{-2mm}
\caption{\textbf{Two-stage publisher-side verification interface.} The original uploader provides enrollment content; a requester, who may be a different person or platform, submits a circulating query and receives the verification result. The release-to-query arrow traces processing during circulation. Private records remain with the publisher (orange); credentials and status form the public layer (teal). The lower row proposes handling for each status; confirmation can coexist with a review flag. C2PA integration and public signing are proposed; retention tests use HMAC-authenticated credentials.}\vspace{-3mm}
\label{fig:positioning}
\label{fig:overall_pipeline}
\label{fig:overview}
\end{figure}

Conceptually, the reference assesses face consistency; the key helps identify the published version. Although original-phase matching, released-image similarity, and ArcFace features~\citep{Deng2019ArcFace} separate many clean face re-renderings, these comparisons may not distinguish the published version from its visually similar original under thresholds calibrated for benign processing. Key projection adds evidence of the signal introduced during enrollment, helping identify the enrolled release and providing \textbf{release binding}.

From a complementary perspective, robust watermarking preserves embedded messages through processing~\citep{Sander2025WAM,Alam2025SpecGuard,Bui2025TrustMark}, while semi-fragile watermarking aims to tolerate benign transformations while detecting manipulation~\citep{Neekhara2024FaceSigns}. However, message survival alone does not establish face consistency. Beyond message recovery, face-specific methods provide further evidence: RecoverMark supports localization and recovery~\citep{An2026RecoverMark}. Recently, LIDMark has combined landmarks and source identifiers for detection, localization, and tracing~\citep{Wu2026LIDMark}. Our private record supports separate assessments of face consistency and release-signal retention without requiring access to the original image at verification.

Building on these complementary roles, \method supports a \textbf{human-in-the-loop workflow} through its verification statuses: \emph{unconfirmed} directs reviewers to record discrepancies, while \emph{unavailable} requests missing information or human assistance. These outcomes \emph{trigger} the human \emph{doctor} to assess the case. Human-review evaluation was conducted under a confidentiality agreement; details and results cannot be disclosed in this paper at the participating party's request. We contribute:
\begin{itemize}[leftmargin=0pt]
    \item We formulate \textbf{release-bound face verification} using a private per-release record, complementing public provenance metadata and query-only forensic detection.

    \item We introduce a weak keyed phase displacement with complementary evidence: spectral agreement assesses face consistency to distinguish evaluated swaps and generative manipulations, while signed first-order projection measures release-signal retention, separating correct from wrong keys and releases from unreleased originals under evaluated benign processing.

    \item We evaluate release verification across datasets, isolating the contribution of key evidence through private-reference controls and assessing practical cost through deployment measurements. To further guide deployment, we analyze processing and attack risks, motivating channel validation and stronger binding between release signals and facial content.
\end{itemize}

\vspace{-3mm}
\section{Problem Setup}
\label{sec:setup}\vspace{-1mm}
\noindent\textbf{Enrollment and query.}
The publisher verifies a circulating face against a specific release enrolled before publication. Let $\mathbf I\in\mathbb R^{N_h\times N_w\times3}$ denote the original RGB face crop, with height $N_h$ and width $N_w$ in pixels. Enrollment adds a weak keyed signal to produce the released crop $\mathbf I'$ and retains a private record $R$. For asset $a$ and face slot $i$, this record is $R_{a,i}$; we omit the indices when considering one face. At verification, a request supplies the circulating query $\mathbf J$ and an identifier resolving its record. Our experiments verify individual crops with known enrollment geometry; the proposed multi-face interface verifies each associated face slot separately.

\vspace{-1mm}
\noindent\textbf{Processing channel.}
Once released, a face may undergo delivery processing. A benign channel $\mathcal P$ specifies recompression, resizing, or another such operation; $\mathcal C$ contains allowed, calibrated channels. Let $\mathcal R$ contain the evaluated face re-rendering operators $M$, such as swapping and diffusion resynthesis~\citep{Chen2020SimSwap,Rombach2022StableDiffusion}. We compare two processing histories:
\begin{equation}
\mathcal H_c:\ \mathbf J=\mathcal P(\mathbf I'),\qquad
\mathcal H_d:\ \mathbf J=\mathcal P(M(\mathbf I')),\quad
\mathcal P\in\mathcal C,\ M\in\mathcal R.
\label{eq:hypotheses}
\end{equation}
Here $\mathcal H_c$ represents delivery processing alone, while $\mathcal H_d$ includes face re-rendering before delivery; neither specifies the editor's intent. A supported channel must therefore preserve useful separation between these histories. Benign queries calibrate thresholds, while manipulation tests assess detection performance. The evaluated verifier receives the channel and crop geometry as inputs rather than inferring an unknown processing history.

\noindent\textbf{Verification and human assessment.}
Verification assesses face consistency and release-signal retention to support publisher decisions. In the proposed workflow, routine cases proceed automatically, while record discrepancies or incomplete verification prompt human assessment. Reviewers use the available scores and record information to examine discrepancies, request missing inputs, and decide whether further verification is needed before acting on a circulating copy. The decision policy formalizes this handoff through verification statuses and a separate review flag.
\vspace{-2mm}
\section{The Doctor Trigger Framework}
\label{sec:method}\vspace{-2mm}
\method prepares each face for verification before publication and assesses circulating copies against the retained release record. We first describe how the publisher creates the released image and its record, then introduce the verification scores and the policy that maps them to confirmed, unconfirmed, or unavailable outcomes.

\begin{figure}[!htbp]
    \centering\vspace{-2mm}
    \raisebox{1.8mm}{\includegraphics[width=0.48\columnwidth]{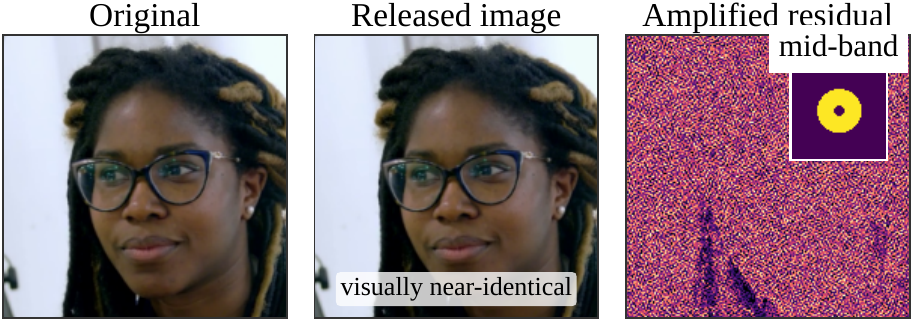}}
    \hfill
    \includegraphics[width=0.50\columnwidth]{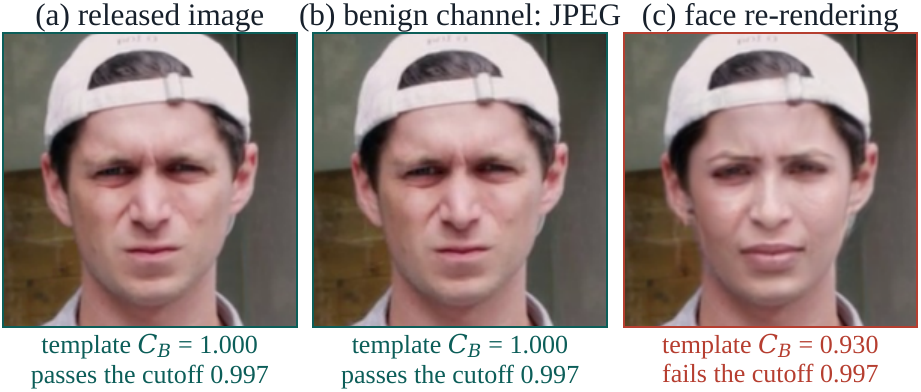}\vspace{-2mm}
\caption{\textbf{Release evidence with preserved visual fidelity.} Left: a weak keyed phase displacement adds a release-specific signal while preserving facial appearance. Right: template agreement $C_B$ illustrates why retaining a spectral reference enables face comparison after JPEG processing and reveals changes caused by re-rendering.}
\vspace{-2mm}
\label{fig:trigger_overview}
\end{figure}

\begin{wrapfigure}{r}{0.45\columnwidth}
\vspace{-6mm}
\centering
\resizebox{0.44\columnwidth}{!}{%
\begin{tikzpicture}[
    node distance=0.45cm and 0.35cm,
    box/.style={rectangle, rounded corners=3pt, draw=gray!70, thick, fill=blue!6,
      minimum height=0.68cm, minimum width=1.4cm, align=center, font=\large},
    keybox/.style={rectangle, rounded corners=3pt, draw=orange!70, thick, fill=orange!10,
      minimum height=0.62cm, minimum width=1.2cm, align=center, font=\large},
    gatebox/.style={rectangle, rounded corners=3pt, draw=red!60, thick, fill=red!6,
      minimum height=0.68cm, minimum width=1.5cm, align=center, font=\large},
    outbox/.style={rectangle, rounded corners=3pt, draw=green!50!black, thick, fill=green!8,
      minimum height=0.68cm, minimum width=1.3cm, align=center, font=\large},
    arrow/.style={-{Stealth[length=4pt]}, thick, gray!70},
    redarrow/.style={-{Stealth[length=4pt]}, thick, red!60},
    greenarrow/.style={-{Stealth[length=4pt]}, thick, green!50!black},
]
\node[box] (fft)     at (0,    0)    {RGB FFT\\$\mathcal{F}$};
\node[box] (annulus) at (2.1,  0)    {Ring\\$\mathcal{K}_{\rm eff}$~\eqref{eq:annulus}};
\node[box] (phase)   at (4.2,  0)    {Phase shift\\$\alpha\pi h_s$};
\node[box] (ifft)    at (6.3,  0)    {IFFT\\$\mathcal{F}^{-1}$};
\node[box] (blend)   at (8.4,  0)    {Blend $\beta$\\clip, uint8};
\node[box] (input)   at (0,    1.2)  {Input $\mathbf{I}$};
\node[keybox] (prg) at (7.8,1.5) {SHA-256\\first 64 bits: key $s$};
\node[keybox] (seed) at (3.8,1.5) {tag, master seed\\sample ID};
\draw[arrow] (seed.east) -- (prg.west);
\node[gatebox] (gate) at (4.2, -2.6)
 {PSNR${\ge}35$ dB\\$\text{SSIM}{\ge}0.98$};
\node[box] (halve)   at (0, -2.6)  {reduce\\strength};
\node[outbox] (output) at (8.4, -2.6) {$\mathbf{I}'$, record $R$};
\draw[arrow] (input.south) -- (fft.north);
\draw[arrow] (input.north) -- +(0,0.65) -| (10.5,0) -- (blend.east);
\draw[arrow] (fft)     -- (annulus);
\draw[arrow] (annulus) -- (phase);
\draw[arrow] (phase)   -- (ifft);
\draw[arrow] (ifft)    -- (blend);
\draw[arrow] (prg.south) -- (7.8,0.75) -- (4.2,0.75) -- (phase.north);
\draw[arrow] (blend.south) -- +(0,-0.6) -| ([xshift=6mm]gate.north);
\draw[greenarrow] (gate.east) -- node[above, font=\large, text=green!50!black]{pass} (output.west);
\draw[redarrow] (gate.west) -- node[above, font=\large, text=red!60]{fail} (halve.east);
\draw[redarrow] (halve.north) -- +(0,0.5) -| ([xshift=-4mm]phase.south);
\end{tikzpicture}}
\caption{\textbf{Enrollment for later verification.} A per-face key determines the signal embedded in the release. The fidelity gate checks the encoded image and reduces signal strength when needed, preserving facial appearance while preparing the release and private record for verification.}\vspace{-12mm}
\label{fig:embedding_pipeline}
\end{wrapfigure}

Fig.~\ref{fig:trigger_overview} links spectral agreement to face consistency and signed key projection to release-signal retention. Fig.~\ref{fig:embedding_pipeline} shows enrollment, from per-face key generation to encoding and visual-fidelity assessment.
\Needspace{5\baselineskip}
\subsection{Enrolling Faces for Release}
\label{sec:pct}
Enrollment prepares a released image $\mathbf I'$ and private record $R$ for later verification: the reference captures the enrolled face, while a weak keyed signal identifies its published version. The master seed and sample identifier derive a per-face key $s$ that determines the signal pattern.

\Needspace{4\baselineskip}
\noindent\textbf{Selecting where to place the signal.}
We use luminance to select one frequency ring and key pattern shared across RGB channels. Define $\mathbf Y_I=0.299\mathbf I_R+0.587\mathbf I_G+0.114\mathbf I_B$ and write its Fourier coefficients as $F_I(k)=A_I(k)e^{i\phi_I(k)}$. Here $k$ indexes a two-dimensional Fourier bin, $A_I$ and $\phi_I$ are its magnitude and phase, and $i$ in $e^{i\phi_I}$ is the imaginary unit. For horizontal and vertical frequencies $f_x,f_y$ in cycles per pixel, define $\kappa(k)=\sqrt{f_x^2+f_y^2}/\sqrt{0.5^2+0.5^2}$. A key $s$ seeds standard-normal draws $u_s(k)$ and the antisymmetric sign pattern $h_s(k)=\operatorname{sign}(u_s(k)-u_s(-k))$. The support is
\begin{equation}
\mathcal K_{\rm eff}=\{k:\kappa_{\min}\le\kappa(k)\le\kappa_{\max},\ A_I(k)>A_{\min},\ h_s(k)\ne0\},
\label{eq:annulus}
\end{equation}
The ring spans radii $\kappa_{\min}$ to $\kappa_{\max}$ and retains coefficients above the amplitude floor $A_{\min}=10^{-6}\max_k A_I(k)$ for phase comparison. The key pattern pairs each bin with its conjugate frequency $-k$, taken modulo the grid dimensions; $\operatorname{sign}$ returns $-1$, $0$, or $1$, assigning opposite signs to each active pair. Self-conjugate bins have $h_s=0$ and fall outside the active support. This luminance-derived support and pattern are shared across RGB channels.

\Needspace{4\baselineskip}
\noindent\textbf{Preserving visual fidelity.}
We control the signal with phase strength $\alpha$, in units of $\pi$ radians, and blend weight $\beta\in[0,1]$; $c\in\{R,G,B\}$ indexes color channels. For each channel, enrollment multiplies the active Fourier coefficients by $e^{i\alpha\pi h_s(k)}$, leaving their unblended magnitudes unchanged. The inverse transform is real because $h_s(-k)=-h_s(k)$. Writing $\widetilde F$ for the three edited channel spectra and $\mathcal F^{-1}$ for the inverse two-dimensional discrete Fourier transform, we blend the edited image with the original,
$\widetilde{\mathbf I}'=(1-\beta)\mathbf I+\beta\mathcal F^{-1}(\widetilde F)$,
where $\widetilde{\mathbf I}'$ denotes the unquantized blended image. We clip to $[0,255]$ and truncate to uint8. This signal-dependent quantization shapes the encoded release $\mathbf I'$; before it, the phase displacement is
\begin{equation}
\delta_s(k)=\arg\{(1-\beta)+\beta e^{i\alpha\pi h_s(k)}\}.
\label{eq:effective_delta}
\end{equation}
Here $\arg$ returns the principal complex phase in radians. The shared RGB multiplier also displaces luminance, with active magnitudes scaled by $\sqrt{1-2\beta(1-\beta)(1-\cos\alpha\pi)}$, a second-order change for small $\alpha$. To enforce visual fidelity, enrollment reduces strength until the encoded release meets PSNR $\ge35$\,dB and SSIM $\ge0.98$ against the original.

\Needspace{4\baselineskip}
\noindent\textbf{Retaining evidence for verification.}
For later comparison, the template $\mathcal T_I$ stores pre-quantization phase; subtracting the displacement recovers original phase in exact arithmetic. Using $\operatorname{wrap}(x)=(x+\pi)\bmod 2\pi-\pi$ to map angles to $[-\pi,\pi)$, we obtain, for $k\in\mathcal K_{\rm eff}$,
\begin{equation}
\mathcal T_I(k)=\operatorname{wrap}(\phi_I(k)+\delta_s(k)),\qquad
\widehat\phi_I(k)=\operatorname{wrap}(\mathcal T_I(k)-\delta_s(k)).
\label{eq:template}
\end{equation}
The record also stores original phase and amplitude, key, strengths, support, and crop geometry. Explicit original phase avoids reconstruction quantization error. Nine encoded-release projections---one full-ring and eight bandwise---normalize verification scores without reloading release pixels.

Before quantization, every active bin receives $\delta_s(k)=\pm\delta_{\rm eff}$. With $\beta=0.15$, $\alpha=0.05$ on FF++/Celeb-DF gives $\delta_{\rm eff}\simeq0.0235$\,rad; $\alpha=0.015$ on DFDC gives $0.0071$\,rad. Section~\ref{sec:verifier} uses this reference and signed displacement to assess face consistency and release-signal retention.

\vspace{-2mm}\subsection{Verifying Faces against Release Records}\vspace{-1mm}
\label{sec:verifier}
Direct verification combines template agreement $C$ for face consistency with normalized key projection $G_s^{(\mathrm{release})}$ for release-signal retention. For eligible queries, confirmation requires both scores to meet their calibrated thresholds. Original-reference agreement $H$ provides a comparison for analysis. Fig.~\ref{fig:tcrh} connects these scores to verification and review.

\begin{figure}[t]
\centering
\includegraphics[width=\textwidth]{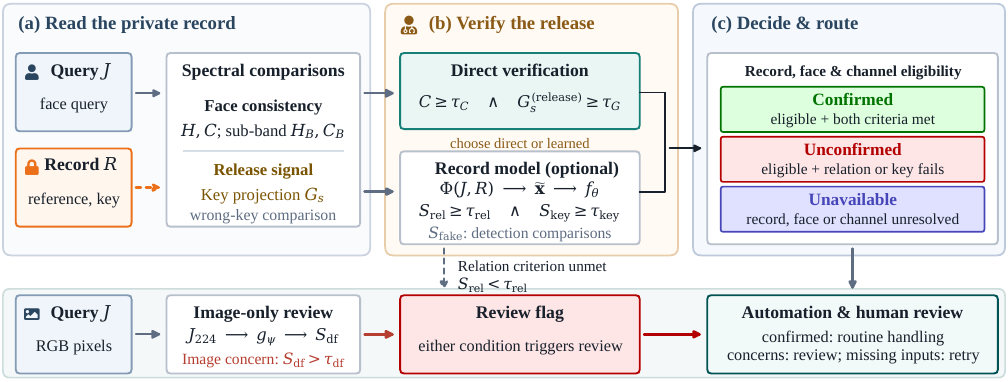}
\caption{\textbf{From release evidence to verification and review.} Direct thresholds and the record model offer two paths for assessing face consistency and release evidence. Eligibility and both verification criteria determine the status, guiding automated handling or human assistance. Optional image-based review adds a separate flag.}\vspace{-2mm}

\label{fig:tcrh}
\end{figure}
\noindent\textbf{Spectral agreement.}
Query luminance $\mathbf Y_J$, computed with the RGB weights of $\mathbf Y_I$, supplies Fourier phase $\phi_J$. We compare it with the stored original phase $\widehat\phi_I$ and release template $\mathcal T_I$. Let $r_k(\mathbf J)=\operatorname{wrap}(\phi_J(k)-\widehat\phi_I(k))$ be the query residual. Weights $w_k=A_I(k)/\max_j A_I(j)$ emphasize frequencies that were strong in the original image; $k$ and $j$ index frequency bins. We compute
\begin{equation}
H(\mathbf J;R)=\frac{\sum_{\mathcal K_{\rm eff}}w_k\cos r_k(\mathbf J)}{\sum_{\mathcal K_{\rm eff}}w_k},\qquad
C(\mathbf J;R)=\frac{\sum_{\mathcal K_{\rm eff}}w_k\cos(\phi_J(k)-\mathcal T_I(k))}{\sum_{\mathcal K_{\rm eff}}w_k}.
\label{eq:cosine_controls}
\end{equation}
Larger $H$ indicates agreement with the original face, while larger $C$ indicates agreement with the release template. Their complements $1-H$ and $1-C$ are the original-reference error and template mismatch. Since $H$ favors the original image and neither score tests a candidate key, high agreement alone does not identify the published version.

To exclude frequencies degraded by processing, a retained sub-band $B$ restricts the sums to $\mathcal K_{\rm eff}\cap B$, giving $H_B$ and $C_B$. Relation-only evaluation uses $H_B\ge\tau_H$ (original reference) or $C_B\ge\tau_C$ (release template), with its cutoff fitted to the first percentile of channel-processed benign training scores. Sub-band detection uses $\kappa\le0.15$; joint comparisons use full-ring $C$.

\noindent\textbf{Signed key projection.}
To resolve the version ambiguity, we project the residual onto its enrolled displacement $\delta_s$. This signed response captures small displacements at first order:
\begin{equation}
G_s(\mathbf J;R)=\frac{\sum_{k\in\mathcal K_{\rm eff}}w_k\delta_s(k)\sin r_k(\mathbf J)}{\sum_{k\in\mathcal K_{\rm eff}}w_k\delta_s(k)^2+\epsilon_0}.
\label{eq:key_projection}
\end{equation}
Here $\epsilon_0=10^{-12}$ stabilizes the denominator. An ideal unquantized release has $r_k=\delta_s(k)$ and $G_s\simeq1$, whereas the original has zero residual. Contributions from independent wrong keys cancel in expectation. The projection thus supplies release-version evidence that complements spectral agreement. App.~\ref{app:wrong_key_null} characterizes the empirical wrong-key distribution.

\noindent\textbf{Normalized key evidence.}
Direct verification measures retention relative to the enrolled release using $G_s^{(\mathrm{release})}=G_s(\mathbf J)/\max(|G_s(\mathbf I')|,10^{-6})$. It pairs this normalized projection with spectral agreement, calibrating a threshold for each criterion under the processing channel. Reference controls use the template projection as the denominator, denoted $G_s^{(\mathrm{template})}$.

\vspace{-2mm}
\subsection{Making Verification and Review Decisions}\vspace{-1mm}
\label{sec:policy}
The record model integrates spectral statistics into verification scores, while an image-only detector assesses manipulation from query pixels without access to the release record. The policy determines release consistency through direct thresholds or learned scores and uses a separate review flag to guide further assessment.

\noindent\textbf{Key-contrast features.}
The record model also uses the contrast between the enrolled key and $m=8$ wrong projection keys, evaluated on the same query and record. Writing $G_j^-$ for wrong-key response $j$, we compute
\begin{equation}
(\overline G^-,G^-_{\max},G_\Delta,G_z)=
\left(\tfrac1m\sum_jG_j^-,\max_jG_j^-,G_s-\max_jG_j^-,
\frac{G_s-\overline G^-}{\operatorname{std}_jG_j^-+10^{-8}}\right).
\label{eq:wrong_key_stats}
\end{equation}
These features use the stored release response for each full-ring or bandwise comparison to normalize both enrolled- and wrong-key projections. The bar and $\max$ denote mean and maximum, and $\operatorname{std}_j$ uses divisor $m$. The margin $G_\Delta$ compares the enrolled key with the strongest sampled wrong key; $G_z$ scales separation by their spread. Changing the projection key leaves $H$ and $C$ unchanged.

\noindent\textbf{Record model.}
The record model combines spectral agreement and key evidence across frequencies to learn their contribution to verification. Its input summarizes the query--record comparison through full-ring statistics $\mathbf f_0\in\mathbb R^{11}$ and eight radial-band feature vectors $\mathbf f_\ell\in\mathbb R^8$, $\ell=1,\ldots,8$. The fixed feature map
\begin{equation}
\mathbf x=\Phi(\mathbf J,R)
=[\mathbf f_0;\mathbf f_1;\cdots;\mathbf f_8]\in\mathbb R^{75}
\label{eq:record_features}
\end{equation}
concatenates agreement, phase-residual, normalized-projection, and wrong-key statistics; semicolons denote concatenation. App.~\ref{app:hyper} specifies the features. Training-set means $\boldsymbol\mu$ and standard deviations $\boldsymbol\sigma$ place them on comparable scales before prediction:
\begin{equation}
\widetilde{\mathbf x}=(\mathbf x-\boldsymbol\mu)\oslash\boldsymbol\sigma,
\qquad
(S_{\rm rel},S_{\rm key},S_{\rm fake})=f_\theta(\widetilde{\mathbf x}).
\label{eq:record_model_map}
\end{equation}
Here $\oslash$ denotes elementwise division, with near-zero scales replaced by one. The MLP $f_\theta$ has parameters $\theta$ and sigmoid outputs: $S_{\rm rel}$ scores face consistency, $S_{\rm key}$ scores correct-key release association, and $S_{\rm fake}$ scores re-rendering. The first two guide confirmation; the third supports detection comparisons.

\noindent\textbf{Training objectives.}
Training keeps enrollment and $\Phi$ fixed. Benign, swapped, and wrong-key queries have binary targets $(y_{\rm rel},y_{\rm key},y_{\rm fake})$ of $(1,1,0)$, $(0,0,1)$, and $(1,0,0)$, respectively. This supervision makes $S_{\rm key}$ a score for the correct-key release class, while $G_s$ measures the retained signal. We train the three heads jointly:
{\small
\begin{equation}
\mathcal L(\theta)=\mathrm{BCE}(S_{\rm fake},y_{\rm fake})+
\lambda_{\rm rel}\mathrm{BCE}(S_{\rm rel},y_{\rm rel})+
\lambda_{\rm key}\mathrm{BCE}(S_{\rm key},y_{\rm key})+
\lambda_{\rm agr}\mathbb E_{y_{\rm rel}=1}\|z_{\rm rel}-z_{\rm key}\|_2^2.
\label{eq:loss_core}
\end{equation}}
Here $\mathrm{BCE}$ denotes binary cross-entropy, and the $\lambda$ coefficients weight the objectives. The agreement term penalizes the distance between auxiliary projections $z_{\rm rel},z_{\rm key}\in\mathbb R^8$ on $y_{\rm rel}=1$ queries; all three classification heads use the shared hidden features. App.~\ref{app:hyper} details the architecture and evaluates how the agreement penalty affects verification by comparing training with and without this term.

\noindent\textbf{Independent image-based review.}
Image-based review adds manipulation cues from the query pixels to the record-based assessment. A detector $g_\psi$, trained separately by cross-entropy, maps the resized RGB query $\mathbf J_{224}\in[0,1]^{224\times224\times3}$ to a softmax score $S_{\rm df}=g_\psi(\mathbf J_{224})$. A vision-language model such as M2F2-Det~\citep{Guo2025M2F2Det} could provide diagnostic reports for reviewers to interpret suspected manipulation and guide further assessment, subject to latency and computational budgets.

\noindent\textbf{Decision policy.}
Eligibility determines whether verification can proceed. We set $\mathsf{eligible}(J)=1$ when the record is resolved, face association and geometry are valid, and the channel is supported; the selected path then compares scores with its thresholds. Otherwise, $\mathsf{eligible}(J)=0$ yields unavailable regardless of scores, indicating missing prerequisites rather than a failed verification criterion:
{\small
\begin{equation}
\mathsf{status}(J)=\left\{
\renewcommand{\arraystretch}{1.2}
\begin{array}{@{}ll@{\quad}l@{}}
\mathrm{confirmed}, & \mathsf{eligible}(J)=1,\ C\ge\tau_C\ \land\ G_s^{(\mathrm{release})}\ge\tau_G
& \multirow{2}{*}{\text{(direct, Sec.~\ref{sec:verifier})}}\\
\mathrm{unconfirmed}, & \mathsf{eligible}(J)=1,\ C<\tau_C\ \lor\ G_s^{(\mathrm{release})}<\tau_G & \\
\mathrm{confirmed}, & \mathsf{eligible}(J)=1,\ S_{\rm rel}\ge\tau_{\rm rel}\ \land\ S_{\rm key}\ge\tau_{\rm key}
& \multirow{2}{*}{\text{(learned, Sec.~\ref{sec:policy})}}\\
\mathrm{unconfirmed}, & \mathsf{eligible}(J)=1,\ S_{\rm rel}<\tau_{\rm rel}\ \lor\ S_{\rm key}<\tau_{\rm key} & \\
\mathrm{unavailable}, & \mathsf{eligible}(J)=0. &
\end{array}\right.
\label{eq:accept}
\end{equation}}
Calibration sets the tolerance for normal variation using benign training queries. Direct thresholds $\tau_C,\tau_G$ share a lower-tail quantile selected to keep joint calibration rejection at most $1\%$; using $C_B$ requires calibration on that sub-band. Learned thresholds $\tau_{\rm rel},\tau_{\rm key}$ use the first percentiles of target-dataset benign training scores with model weights fixed. These per-head cutoffs can yield joint rejection above $1\%$. App.~\ref{app:joint_policy} compares both paths at a matched joint calibration budget.

For human assessment of eligible queries, image-only review flags $S_{\rm df}>\tau_{\rm df}$; with the record model, $\widehat y_{\rm review}=\mathbf1[S_{\rm rel}<\tau_{\rm rel}]\lor\mathbf1[S_{\rm df}>\tau_{\rm df}]$ combines relation discrepancies and image-based concerns. Here $\tau_{\rm df}$ is the calibrated image-only cutoff, $\mathbf1$ an indicator, and $\land/\lor$ denote AND/OR. Together, statuses and flags guide routine handling, review of discrepancies, and requests for the records or processing information needed to complete verification and inform publisher decisions.

\vspace{-2mm}
\section{Experiments}
\label{sec:exp}
\label{sec:exp-setup}\vspace{-1mm}

\noindent\textbf{Datasets and enrollment.}
We enroll $7{,}366$ real faces from FF++, DFDC, and Celeb-DF~\citep{Rossler2019FFpp,Dolhansky2020DFDC,Li2020CelebDF}, using native-resolution crops located with SCRFD~\citep{Guo2022SCRFD} and cross-checked with YuNet or a Haar-cascade face detector. Each face receives a release and private record; verification reads that record without reloading the original.

\noindent\textbf{Queries and splits.}
To isolate manipulation effects, each released image is paired with a SimSwap~\citep{Chen2020SimSwap} face swap; both are verified against the same record, yielding $14{,}732$ queries. Replacement face identities come from manipulated images in the same dataset and split.

\noindent\textbf{Metrics and training settings.}
Face-swap detection AUC compares swaps with paired releases, excluding wrong-key queries; Table~\ref{tab:robust_detail_main} also covers StarGAN, UniFace, InSwapper, and SD inpainting. Key-discrimination AUC compares correct and wrong keys, while FPR measures benign rejection. Pass/rejection rates describe the specified criterion, not necessarily joint confirmation. Transfer tests freeze FF++- or DFDC-trained weights; App.~\ref{app:experimental_setup} gives preparation, calibration, and training details.

\vspace{-3mm}
\subsection{Enrollment: creating the release}\vspace{-1mm}
\label{sec:exp-enroll}
\label{sec:exp-fidelity}
We evaluate release distortion and sensitivity to face re-rendering on the same assets. The release must therefore preserve visual quality while providing evidence that tolerates benign processing and remains sensitive to changes in face content.

\noindent\textbf{Visual quality.}
Fig.~\ref{fig:imperceptibility} shows the fidelity of phase editing, blending, and uint8 encoding. On the same face, our release achieves $51.2$\,dB PSNR, compared with $29.9$ for StegaStamp and $41.4$ for KAD-Net. The private record makes the weak signal measurable against a known reference, supporting high-fidelity publication at the cost of retaining per-release evidence. Accordingly, the comparison reflects different payload and reference budgets. App.~\ref{app:fidelity_table} reports fidelity across datasets and resolutions.

\begin{figure}[!h]
\centering
\includegraphics[width=\textwidth]{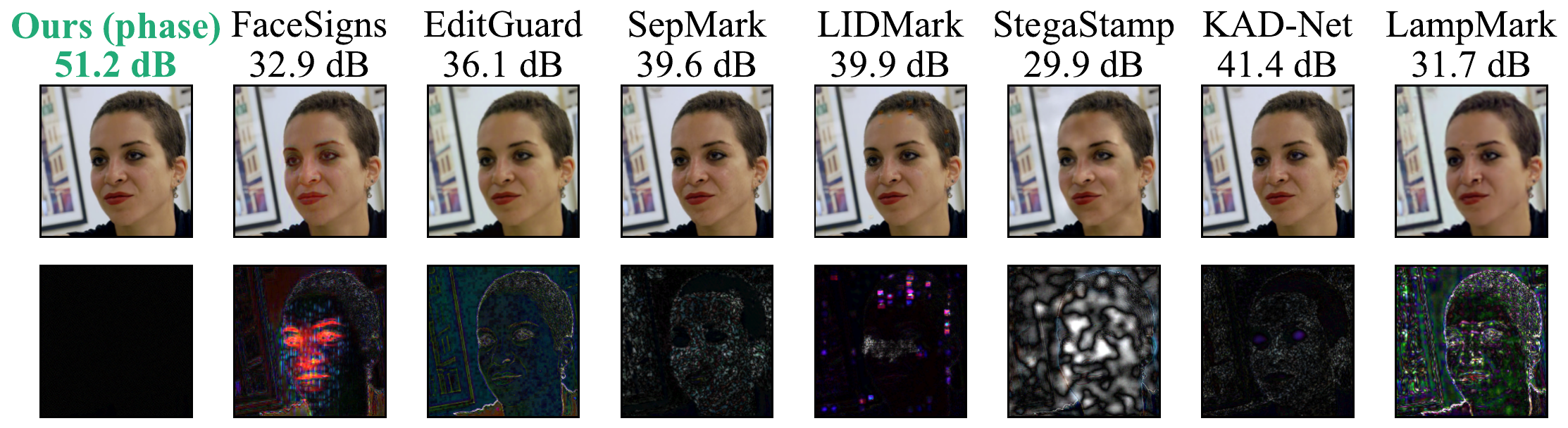}
\caption{\textbf{Visual fidelity of released images.} Phase editing, blending, and uint8 encoding yield small release distortion. Top: the same \textsc{FF++} face under each marking configuration, with PSNR relative to the original. Bottom: $10\times$-amplified residuals; payload and reference budgets differ across methods.}
\label{fig:imperceptibility}
\end{figure}

\vspace{-3mm}
\subsection{Face and Release Verification}
\label{sec:exp-verif}
\label{sec:exp-robust}\vspace{-1mm}

\begin{table}[tbp]
\centering\footnotesize\setlength{\tabcolsep}{2.5pt}\renewcommand{\arraystretch}{1.03}
\caption{\textbf{Frequency-restricted spectral agreement} (AUC $\uparrow$). Reading only $\kappa\le0.15$ ($C_B$) improves separation under benign processing, particularly blur, compared with the full ring ($C$). The FF++ comparison includes the band-diagnostic cohort and is exploratory (App.~\ref{app:readout}).}
\label{tab:robust_detail_main}
\begin{tabular*}{\linewidth}{@{\extracolsep{\fill}}l cccc ccc cccc@{}}
\toprule
& \multicolumn{4}{c}{Benign processing (FF++)} & \multicolumn{3}{c}{Clean SimSwap} & \multicolumn{4}{c}{Other face manipulations (FF++)} \\
\cmidrule(lr){2-5}\cmidrule(lr){6-8}\cmidrule(lr){9-12}
Direct score & JPEG & Blur & WebP & Pixel. & \textsc{FF++} & \textsc{DFDC} & \textsc{CDF} & StarGAN & UniFace & InSwap & SD-inp \\
\midrule
Template $1-C_B$ & \textbf{0.999} & \textbf{0.936} & \textbf{0.995} & \textbf{1.000} & \textbf{1.000} & \textbf{0.992} & \textbf{0.991} & \textbf{1.000} & \textbf{1.000} & \textbf{1.000} & \textbf{1.000} \\
\quad $1-C$ (full ring) & 0.994 & 0.597 & 0.979 & 0.996 & \textbf{1.000} & 0.978 & 0.984 & \textbf{1.000} & \textbf{1.000} & \textbf{1.000} & \textbf{1.000} \\
\bottomrule
\end{tabular*}
\end{table}

\noindent\textbf{Complementary relation and key evidence.}
Fig.~\ref{fig:signal} shows distinct responses to face re-rendering and projection-key changes. Evaluated re-renderings increase original-reference error $1-H$, while key projection can remain high, particularly on Celeb-DF and DFDC. Changing only the projection key instead suppresses the key response and leaves $H$ unchanged. The two scores thus provide separate evidence of face consistency and release-signal retention for the joint decision. Dataset-specific calibration accounts for variation in their distributions, as examined in App.~\ref{app:evt_transfer}.

\begin{figure}[!h]
\centering
\includegraphics[width=\textwidth]{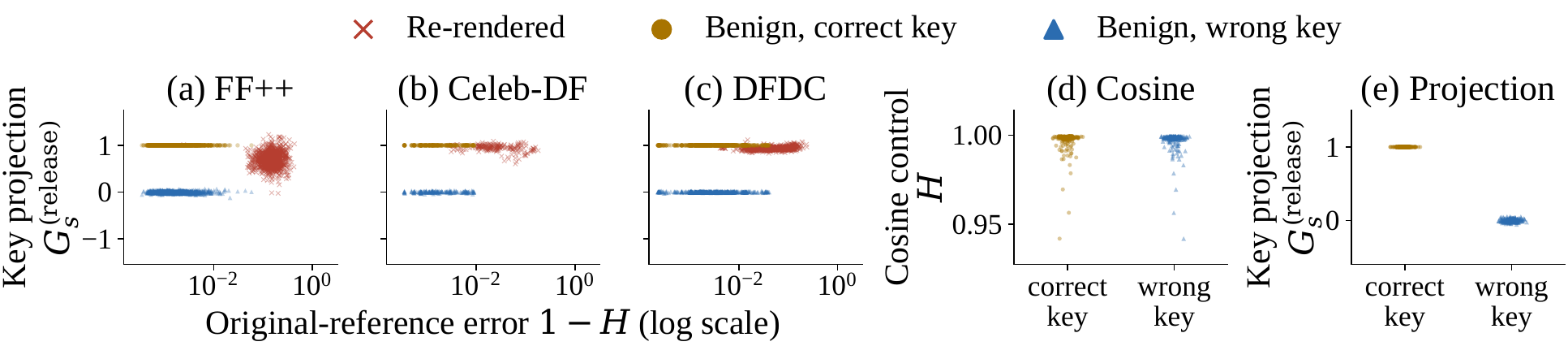}\vspace{-2mm}
\caption{\textbf{Complementary evidence for face consistency and release-signal retention.} Left: re-rendering changes the face spectrum and raises original-reference error $1-H$, while the keyed signal can persist. Right: a wrong projection key disrupts alignment with the release signal (e) without changing the face comparison (d). Joint verification therefore combines spectral agreement with release-normalized key projection to assess both aspects of the enrolled release.}
\label{fig:signal}\vspace{-3mm}
\end{figure}

\noindent\textbf{Benign channels.}
Original-image amplitude weights emphasize strong enrolled coefficients, while $C_B$ concentrates verification on the lower frequencies preserved by processing. Table~\ref{tab:robust_detail_main} shows how this design improves face-swap separation. A fixed-band evaluation on additional source videos further supports its effectiveness under blur and reposting. Channel selection nevertheless requires reliable benign acceptance and manipulation rejection at the deployed threshold; App.~\ref{app:readout} examines these operating points and the retention of release-key evidence.

\noindent\textbf{Comparison with peer methods.}
\label{sec:exp-peers}
Each watermarking peer embeds its own mark into the same aligned source crops and undergoes SimSwap with the same replacement identities. Table~\ref{tab:peer_detect} compares their verification signals. The methods differ in what each verifier reads and what its verdict establishes:
\begin{itemize}[leftmargin=0pt]\vspace{-1mm}
\item \textbf{Manifest.} C2PA verifies signed assertions about an asset's provenance~\citep{c2pa}, but does not itself compare a circulating face with a publisher's private enrollment reference.\vspace{-1mm}
\item \textbf{Embedded marks.} Robust message recovery~\citep{Tancik2020StegaStamp,Sander2025WAM,Lu2025VINE} can succeed after a face swap, leaving little evidence of replacement: WAM and VINE-R yield AUC $0.50$. Fragile marks and decoder mismatch~\citep{Neekhara2024FaceSigns,Wu2023SepMark,He2026KADNet}, or landmark displacement~\citep{Wu2026LIDMark}, are more directly sensitive to manipulation, but their effectiveness varies across datasets in Table~\ref{tab:peer_detect}. A recovered mark alone therefore cannot establish consistency with the enrolled face.\vspace{-1mm}
\item \textbf{Image detectors.} Recent detectors learn manipulation cues through discrepancy learning and vision-language representations~\citep{Yang2025D3,Guo2025M2F2Det,Jung2026SD2}. Our ResNet-50 control reaches AUC $0.68$ on Celeb-DF and $0.76$ on DFDC with fixed FF++ weights, illustrating the importance of target-dataset evaluation. In our framework, image-only scores guide review alongside verification against the private release record.\vspace{-1mm}
\item \textbf{Private face record.} Our verifier compares the query with the enrolled release and reads the release-specific keyed displacement under calibrated benign channels. The record-model row reports $S_{\rm fake}$; release confirmation additionally requires both relation and key criteria to be satisfied.
\end{itemize}

\begin{table}[!htb]

\centering\fontsize{8}{8.8}\selectfont\setlength{\tabcolsep}{2.4pt}\vspace{-3mm}
\caption{\textbf{Face-swap detection within and across datasets} (AUC $\uparrow$). Both record-based paths separate swaps across datasets. Learned scores reach near-perfect within-dataset AUC, while direct comparison transfers better to Celeb-DF. $\dagger$: fixed public weights; ---: no dataset-specific training.}
\label{tab:peer_detect}
\label{tab:main}
\begin{tabular*}{\linewidth}{@{\extracolsep{\fill}}ll c ccc cc cc@{}}
\toprule
 & & & \multicolumn{3}{c}{Per-dataset evaluation} & \multicolumn{2}{c}{From \textsc{FF++}} & \multicolumn{2}{c}{From \textsc{DFDC}} \\
\cmidrule(lr){4-6}\cmidrule(lr){7-8}\cmidrule(lr){9-10}
Method & Detection score & Bits & \textsc{FF++} & \textsc{CDF} & \textsc{DFDC} & $\to$\textsc{CDF} & $\to$\textsc{DFDC} & $\to$\textsc{CDF} & $\to$\textsc{FF++} \\
\midrule
StegaStamp & Robust bit error & 100 & 0.91 & 0.57 & 0.52 & 0.59 & 0.54 & 0.56 & 0.80 \\
FaceSigns$^\dagger$ & Fragile bit error & 128 & \textbf{1.00} & 0.70 & 0.70 & --- & --- & --- & --- \\
SepMark$^\dagger$ & Decoder mismatch & 128 & 0.69 & 0.50 & 0.50 & --- & --- & --- & --- \\
WAM$^\dagger$ & Bit error & 32 & 0.50 & 0.50 & 0.50 & --- & --- & --- & --- \\
KAD-Net$^\dagger$ & Fragile bit error & 128 & 0.54 & 0.50 & 0.50 & --- & --- & --- & --- \\
VINE-R$^\dagger$ & Bit error & 100 & 0.50 & 0.50 & 0.50 & --- & --- & --- & --- \\
LIDMark & Landmark distance (px) & 152 & \textbf{1.00} & 0.74 & 0.84 & 0.90 & 0.74 & 0.92 & 0.98 \\
\midrule
\multirow{3}{*}{Ours} & Image-only $S_{\rm df}$ & none & 0.99 & 0.98 & \textbf{1.00} & 0.68 & 0.76 & 0.93 & 0.95 \\
                         & Direct spectral $1-C_B$ & none & \textbf{1.00} & 0.99 & 0.99 & \textbf{0.99} & 0.99 & \textbf{0.99} & \textbf{1.00} \\
                         & Record model $S_{\rm fake}$ & none & \textbf{1.00} & \textbf{1.00} & \textbf{1.00} & 0.98 & \textbf{1.00} & 0.97 & \textbf{1.00} \\
\bottomrule
\end{tabular*}

\end{table}

\begin{wraptable}{r}{0.48\textwidth}
\centering\footnotesize\setlength{\tabcolsep}{2pt}
\vspace{-5mm}
\caption{\textbf{Release evidence from the key.} Direct reference tests detect swaps but admit never-released originals under JPEG calibration. Key projection $G_s$ distinguishes releases from originals even after JPEG processing.}
\label{tab:reference_controls_main}
\begin{tabular}{@{}l ccc cc@{}}
\toprule
& \multicolumn{3}{c}{Clean swap AUC} & \multicolumn{2}{c}{Pass rate (\%)} \\
\cmidrule(lr){2-4}\cmidrule(lr){5-6}
Direct test & FF++ & DFDC & CDF & \shortstack{Original\\pass $\downarrow$} & \shortstack{Release\\pass $\uparrow$} \\
\midrule
Original $H$ & 1.000 & 0.977 & 0.978 & 100 & 99.25 \\
Template $C$ & 1.000 & 0.978 & 0.984 & 100 & 99.25 \\
SSIM & 1.000 & 1.000 & 1.000 & 100 & 99.5 \\
ArcFace & 1.000 & 1.000 & 1.000 & 99.25 & 99.0 \\
Key $G_s$ & -- & -- & -- & 0 & 99.5 \\
\bottomrule
\end{tabular}\vspace{-5mm}
\end{wraptable}
\noindent\textbf{Effect of key projection.}
With thresholds calibrated on each dataset's benign training releases, Table~\ref{tab:reference_controls_main} shows that references admit most unreleased originals, whereas $G_s$ rejects them while accepting JPEG releases. The key therefore helps distinguish the published version from its original.

\noindent\textbf{Joint decision policy across datasets.}
At a matched $1\%$ joint calibration budget, direct verification rejects more transferred swaps, and both paths reject all tested wrong keys. Under JPEG and reposting, the record model accepts more benign releases but also more swaps. Publishers can therefore start with direct verification and use the record model when reducing benign rejection is valuable and target-channel swap rejection remains adequate. App.~\ref{app:joint_policy} quantifies this tradeoff.

\noindent\textbf{Verification cost and storage.}
Relation-only $C_B$ takes $5.9$\,ms in memory; NPZ loading, full-ring features, and MLP take $149.9$\,ms. Separately, a $2.3$\,KiB sub-band payload preserves detection/key AUCs to three decimals on $150$ FF++ assets, excluding keys and metadata. App.~\ref{app:deploy_cost} specifies the timed paths and codec configurations.

\subsection{Verification after Image Processing}

\label{sec:exp-roundtrip}
To examine how delivery affects verification, we process $200$ FF++ releases and their SimSwap, UniFace reenactment, and SD inpainting counterparts through each channel. Thresholds follow Sec.~\ref{sec:verifier}, using all $1{,}144$ benign training releases. Providing the enrolled records isolates spectral robustness, while credential retention measures whether the record link survives.

\begin{figure}[!htbp]
\centering\vspace{-2mm}
\includegraphics[width=0.92\textwidth]{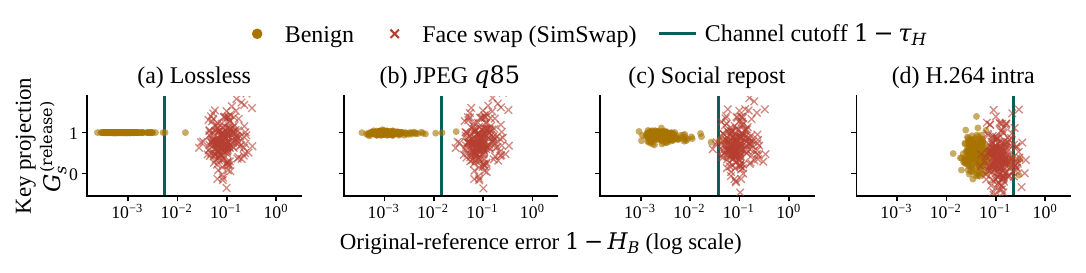}
\vspace{-5mm}
\caption{\textbf{Direct spectral evidence across channels.} Vertical lines mark $1-\tau_H$ for channel-calibrated agreement $H_B$; release-normalized key projection is shown separately. Table~\ref{tab:roundtrip} reports channel-calibrated pass/reject rates.}
\label{fig:roundtrip}\vspace{-2mm}
\end{figure}

Fig.~\ref{fig:roundtrip} and Table~\ref{tab:roundtrip} show strong manipulation separation after lossless read-back, metadata stripping, screenshots, JPEG $q85$, and one repost. However, repeated reposting, intra-frame H.264, and cropping weaken swap rejection despite high benign pass rates; reenactment and SD inpainting remain more separable. A benign channel $\mathcal P$ can therefore erase evidence distinguishing the histories in Eq.~\eqref{eq:hypotheses}. The supported set $\mathcal C$ must reflect verification performance, not merely benign processing: recalibration cannot restore lost separation. Clip averaging partly recovers swap rejection, as detailed in App.~\ref{app:readout}; more stable features and geometry-aware comparison may further extend channel coverage.

\begin{table}[!htb]
\centering\scriptsize\setlength{\tabcolsep}{2.2pt}\vspace{-2mm}
\caption{\textbf{Credential retention and direct spectral agreement across delivery channels.} Relation tests use $H_B$ for stills, masked templates for crops, and mean $C_B$ for $120$ SimSwap clips. Channel cutoffs target $1\%$ benign rejection. Credential retention requires an HMAC-authenticated asset ID.}
\label{tab:roundtrip}
\begin{tabular*}{\linewidth}{@{\extracolsep{\fill}}l c r r r r@{}}
\toprule
 & & \multicolumn{4}{c}{\shortstack{Channel-calibrated relation tests\\target benign rejection: $1\%$}} \\
\cmidrule(lr){3-6}
Channel & \shortstack{Credential\\retained} & \shortstack{Face swap\\reject $\uparrow$} & \shortstack{Reenactment\\reject $\uparrow$} & \shortstack{SD inpaint.\\reject $\uparrow$} & \shortstack{Benign\\pass $\uparrow$} \\
\midrule
\csname @@input\endcsname tables/table4_layout_rows.tex
\bottomrule
\end{tabular*}\vspace{-2mm}

\end{table}

\noindent\textbf{Takeaway.} Because processing affects relation and key evidence differently, channel selection requires evaluating both signals together. Joint tests in App.~\ref{app:joint_policy} show that reposting increases swap acceptance for both paths. Publishers should therefore validate joint decisions on intended delivery channels and provide a trusted record lookup when credentials are lost.

\vspace{-2mm}
\section{Conclusion}
\label{sec:conclusion}\vspace{-1mm}
\method enables publishers to assess circulating faces against specific enrolled releases. A private spectral reference and weak keyed signal provide complementary evidence of face consistency and release-signal retention while preserving visual fidelity. Experiments across three datasets support direct verification without a trained model under calibrated processing conditions, making enrollment a basis for subsequent verification.

More broadly, \method connects verification to publisher decision-making: confirmed queries support routine handling, unconfirmed queries identify discrepancies for review, and unavailable queries prompt additional information or human assistance. These outcomes could help publishers manage verification at scale and prioritize further assessment. Linking evidence to release records, the framework complements provenance credentials and image-only detection.

\clearpage
\subsection*{Ethics Statement}
Publishers such as news organizations and public institutions could retain enrollment records, while publisher-operated verification services assess copies submitted by users or reviewers. Deployment requires enrollment consent, controlled access to private records, and retention and deletion policies. Verification results could support review of circulating images, but public reports should clearly communicate coverage and uncertainty: consistency with a released version does not establish the truth of the depicted event. Private spectral records contain sensitive face-derived information and require protection, as examined in App.~\ref{app:metadata_privacy}.

\subsection*{AI use statement}
Generative AI tools assisted with editing and polishing the manuscript, drafting and revising portions of the text, identifying related work, checking bibliographic information, and developing mathematical proofs and their written presentation. AI assistance also supported code revision, execution of supplementary experiments, analysis of results, and consistency checks across the method description, reported measurements, and released code. These activities informed revisions to the interpretation and scope of empirical claims. The authors retain responsibility for the correctness, originality, and final presentation of all text, proofs, results, and code.

\bibliographystyle{iclr2027_conference}
\bibliography{main}

\newpage
\section*{Appendix Overview}
\renewcommand{\arraystretch}{0.95}
\noindent\begin{tabular}{@{}l@{\hspace{.6em}}p{0.76\linewidth}r@{}}
\multicolumn{3}{@{}l}{\textbf{Background and Analysis}} \\
\S\,\ref*{app:notation} & \hyperref[app:notation]{Notation}\dotfill & \pageref{app:notation} \\
\S\,\ref*{sec:related} & \hyperref[sec:related]{Related Work}\dotfill & \pageref{sec:related} \\
\S\,\ref*{sec:score_model} & \hyperref[sec:score_model]{Analysis of Relation and Key Evidence}\dotfill & \pageref{sec:score_model} \\
\multicolumn{3}{@{}l}{\textbf{Experimental Evidence}} \\
\S\,\ref*{app:experimental_setup} & \hyperref[app:experimental_setup]{Experimental Setup and Reproducibility}\dotfill & \pageref{app:experimental_setup} \\
\S\,\ref*{app:fidelity_table} & \hyperref[app:fidelity_table]{Release Fidelity and Visual Examples}\dotfill & \pageref{app:fidelity_table} \\
\S\,\ref*{app:reference_controls} & \hyperref[app:reference_controls]{Private References and Release-Key Evidence}\dotfill & \pageref{app:reference_controls} \\
\S\,\ref*{app:robust_table} & \hyperref[app:robust_table]{Channel Robustness and Spectral Verification}\dotfill & \pageref{app:robust_table} \\
\S\,\ref*{app:exp_detail} & \hyperref[app:exp_detail]{Operating Points and Calibration}\dotfill & \pageref{app:exp_detail} \\
\S\,\ref*{app:sensitivity} & \hyperref[app:sensitivity]{Sensitivity of the Release Mechanism}\dotfill & \pageref{app:sensitivity} \\
\multicolumn{3}{@{}l}{\textbf{Deployment and Failure Cases}} \\
\S\,\ref*{app:deploy} & \hyperref[app:deploy]{Deployment Interface and Cost}\dotfill & \pageref{app:deploy} \\
\S\,\ref*{app:limits} & \hyperref[app:limits]{Verification Limits and Record Privacy}\dotfill & \pageref{app:limits} \\
\end{tabular}
\renewcommand{\arraystretch}{1.0}
\medskip
\renewcommand{\thesection}{\Alph{section}}
\renewcommand{\theHsection}{appendix.\Alph{section}}
\renewcommand{\theHsubsection}{appendix.\Alph{section}.\arabic{subsection}}
\renewcommand{\theHsubsubsection}{appendix.\Alph{section}.\arabic{subsection}.\arabic{subsubsection}}
\setcounter{section}{0}

\section{Notation}
\label{app:notation}
Symbols are defined where they enter the method or analysis and collected here for reference. Bold symbols denote images or feature vectors; calligraphic symbols denote sets, operators, hypotheses, or the phase template. Subscripts $I$ and $J$ identify original and query quantities. All angles are in radians.

\renewcommand{\arraystretch}{1.12}
\begin{longtable}{@{}p{3.05cm}@{\hspace{1.2em}}p{\dimexpr\linewidth-3.05cm-1.2em\relax}@{}}
\toprule
\endhead

\multicolumn{2}{@{}l}{\textbf{Images and the record}} \\
$N_h,N_w$ & crop height and width in pixels; RGB supplies the third dimension \\
$\mathbf{I}$ & aligned original crop, before publication (Sec.~\ref{sec:setup}) \\
$\mathbf{I}'$ & released crop, the original after the phase displacement \\
$\mathbf{J}$ & query: the aligned crop received at verification \\
$s$ & secret key of one enrolled face \\
$R,R_i,R_{a,i}$ & $R$ is the record for a single crop; $R_i$ indexes a face and $R_{a,i}$ additionally indexes its asset $a$. The private record contains phase and amplitude arrays, support, key, strengths, and release projections (Sec.~\ref{sec:pct}) \\
$\mathcal{T}_{I},\ \mathcal{T}_{I}(k)$ & stored phase template: luminance phase after the ideal edit, before release quantization \\
$A_I(k),\ \phi_I(k)$ & original luminance amplitude and phase; $F_I(k)=A_I(k)e^{i\phi_I(k)}$ \\
$\widehat{\phi}_{I},\ \widehat{\phi}_I(k)$ & original phase obtained from the record, stored explicitly in the evaluated implementation and recoverable in exact arithmetic by Eq.~\eqref{eq:template} \\
$\phi_{J},\ \phi_J(k)$ & query luminance phase, as an array and at one bin \\
$\mathbf Y_I,\mathbf Y_J$ & original/query luminance, $0.299R+0.587G+0.114B$ for the respective RGB pixels \\
$F_I,\widetilde F$ & original luminance spectrum and the three phase-edited RGB spectra, respectively \\
$\widetilde{\mathbf I}'$ & blended release before clipping and uint8 truncation \\
$c,I_c$ & color-channel index $c\in\{R,G,B\}$ and the corresponding original channel \\
$\mathcal{F},\mathcal{F}^{-1}$ & two-dimensional discrete Fourier transform and its inverse \\
$P,U,V$ & publisher, requesting user, and publisher-side verifier in Fig.~\ref{fig:positioning} \\

\addlinespace[2pt]
\multicolumn{2}{@{}l}{\textbf{The ring and the displacement}} \\
$k,-k$ & a Fourier-bin index and its conjugate index, modulo the grid dimensions \\
$f_x,f_y$ & horizontal and vertical frequency in cycles per pixel \\
$B$ & selected frequency sub-band; direct comparison uses $B=\{k:\kappa(k)\le0.15\}$ \\
$\mathcal{K}_{\rm eff}$ & active bins, Eq.~\eqref{eq:annulus} \\
$\kappa$ & normalized radius of a Fourier bin; $\kappa_{\min},\kappa_{\max}$ bound the ring and $A_{\min}$, a $10^{-6}$ fraction of the largest amplitude, is the amplitude floor, together defining $\mathcal{K}_{\rm eff}$ \\
$u_s(k)$ & standard-normal field over the Fourier grid, drawn from a generator seeded with the key $s$ \\
$h_s(k),h$ & the key's sign pattern $\operatorname{sign}(u_s(k)-u_s(-k))$: $\pm1$ on every active bin, the same in every channel, with $h_s(-k)=-h_s(k)$ \\
$\Delta\phi_{s,k}$ & pre-blend phase challenge $\alpha\pi h_s(k)$, before the release blend \\
$\delta_s(k),\delta_i,\delta$ & signed keyed displacement; $i=k$ and bare $\delta$ abbreviate the binwise quantity in the analysis/figures. Its effective value is given by Eq.~\eqref{eq:effective_delta} \\
$\alpha,\beta$ & dimensionless phase strength (pre-blend angle $\alpha\pi$) and blend weight; $\alpha_i,\beta_i$ are the values for face $i$ \\
$\delta_{\rm eff}$ & displacement magnitude on every active bin, determined by $\alpha$ and $\beta$ (Sec.~\ref{sec:pct}) \\
$w_k$ & original-image amplitude weight on a bin, Eq.~\eqref{eq:key_projection} \\
$r_k(\mathbf{J}),r_i,r$ & wrapped phase drift of the query against $\widehat{\phi}_{I}$ (Sec.~\ref{sec:verifier}) \\

\addlinespace[2pt]
\multicolumn{2}{@{}l}{\textbf{Readings from the record}} \\
$H(\mathbf{J};R)$ & amplitude-weighted agreement with the original phase, Eq.~\eqref{eq:cosine_controls} \\
$H_B,C_B$ & the original/template readings restricted to sub-band $B$ \\
$1-H,1-H_B$ & original-reference error over the full ring or sub-band; larger values indicate poorer agreement \\
$C(\mathbf{J};R)$ & template agreement: the same reading against $\mathcal{T}_I$ instead \\
$1-C,1-C_B$ & template mismatch over the full ring or sub-band; distinct from original-reference error \\
$G_s(\mathbf{J};R)$ & key projection: signed first-order agreement with the key, Eq.~\eqref{eq:key_projection} \\
$s^{-},m,j$ & independently sampled wrong key, number of wrong keys, and their index $j=1,\ldots,m$; $j$ is a dummy bin index in the amplitude maximum; $i,j$ in Fig.~\ref{fig:template} instead index queries and records \\
$G^{-}_{j},\overline{G^{-}},G^{-}_{\max}$ & the same projection under $m$ independent wrong keys, their mean and maximum, Eq.~\eqref{eq:wrong_key_stats} \\
$G_{\Delta},G_z$ & key margin above the strongest sampled wrong key, and standardized separation from the wrong-key mean, Eq.~\eqref{eq:wrong_key_stats} \\
$\epsilon_0$ & projection denominator stabilizer, $10^{-12}$; distinct from the attack budget $\epsilon$ \\
$G_s^{(\mathrm{release})}$ & $G_s$ divided by the release's own stored projection (Sec.~\ref{sec:verifier}) \\
$G_s^{(\mathrm{template})}$ & $G_s$ divided by the template's projection instead, used by the reference controls \\

\addlinespace[2pt]
\multicolumn{2}{@{}l}{\textbf{Scores, labels, and the policy}} \\
$\Phi,\mathbf x$ & fixed query-and-record feature map and its 75-dimensional output, Eq.~\eqref{eq:record_features} \\
$\mathbf f_0,\mathbf f_\ell,\ell$ & 11 full-ring features, eight features from radial band $\ell$, and band index $\ell=1,\ldots,8$ \\
$[\cdot;\cdot],\oslash$ & vector concatenation and elementwise division \\
$\widetilde{\mathbf x},\boldsymbol\mu,\boldsymbol\sigma$ & standardized model input, training feature means, and training standard deviations with near-zero scales replaced by one \\
$f_\theta,\theta$ & learned record MLP and its parameters; maps $\widetilde{\mathbf x}$ to three sigmoid scores \\
$\mathbf J_{224},g_\psi,\psi$ & resized RGB query in $[0,1]^{224\times224\times3}$, image-only classifier, and its parameters; $g_\psi$ outputs $S_{\rm df}$ \\

$S_{\rm rel},S_{\rm key},S_{\rm fake}$ & sigmoid record-model scores for face consistency, a benign release paired with its correct key, and re-rendering; larger values favor these training targets (Sec.~\ref{sec:policy}) \\
$S_{\rm df}$ & image-only manipulation score, from $\mathbf{J}$ alone; larger is more suspicious \\
$y_{\rm rel},y_{\rm key},y_{\rm fake}$ & binary training labels assigned by query construction, with $1$ indicating the named target (App.~\ref{app:random_variables}) \\
\addlinespace[1pt]
$\lambda_{\rm rel},\lambda_{\rm key},\lambda_{\rm agr},\mathcal{L}$ & weights of relation BCE, key BCE, and representation agreement, and the record-model training objective $\mathcal L(\theta)$, Eq.~\eqref{eq:loss_core} \\
$z_{\rm rel},z_{\rm key}$ & eight-dimensional representations from linear maps of the hidden layer, pulled together for training rows with $y_{\rm rel}=1$, Eq.~\eqref{eq:loss_core} \\
$\tau_H,\tau_C$ & direct agreement cutoffs fitted separately for the selected reference, support, channel, and aggregation; relation-only tests use first-percentile benign-training calibration, while joint decisions use the budget in Sec.~\ref{sec:policy} \\
$\tau_G$ & cutoff on release-normalized key projection in the direct joint policy \\
$\tau_{\rm rel},\tau_{\rm key},\tau_{\rm df}$ & record and review thresholds in the decision policy (Sec.~\ref{sec:policy}) \\
$\tau$ & generic score cutoff; the theory applies it to $H$, while empirical tables specify the reading and calibration protocol \\
$\mathsf{eligible}(J),\mathsf{status}(J)$ & query eligibility indicator and three-state verdict for its associated record and channel, Eq.~\eqref{eq:accept} \\
$\widehat y_{\rm review}$ & binary review flag, separate from verification status; the learned-path form combines relation failure and image-only concern \\

\addlinespace[2pt]
\multicolumn{2}{@{}l}{\textbf{Channels and hypotheses}} \\
$\mathcal{C}$ & the benign channels a deployment supports \\
$\mathcal{P}$ & the processing channel taken by a query, $\mathcal{P}\in\mathcal{C}$ under $\mathcal{H}_c$ \\
$\mathcal{R},M$ & the set of face re-renderings, and one re-rendering in that set \\
$\mathcal{H}_c,\mathcal{H}_d$ & benign-channel and re-rendering-then-channel hypotheses, Eq.~\eqref{eq:hypotheses} \\

\addlinespace[2pt]
\multicolumn{2}{@{}l}{\textbf{Local noise and concentration analysis}} \\
$i,q_i,a_i$ & active-bin index $i=k$, $q_i=\cos r_i$, and normalized weight $a_i=w_i/\sum_jw_j$; this $i$ is not a face index \\
$b,B_b,a_b,Q_b$ & block index, its bin set, total weight $\sum_{i\in B_b}a_i$, and weighted mean cosine within the block \\
$n_w,n_{\rm eff}$ & effective sample sizes for independent weighted bins and independent weighted blocks, respectively \\
$\mu_c,b_a,\mu_d$ & nominal benign mean, nonnegative allowance for benign mean reduction, and upper bound on the re-rendered mean; $b_a$ is a scalar allowance, not a block weight \\
$t$ & positive score deviation in the concentration proof; in Fig.~\ref{fig:hyper}(a), the target benign rejection rate \\
$\xi_i$ & residual channel noise at bin $i$ in the local model $r_i=\delta_i+\xi_i$ \\
$\rho$ & attenuation factor $\sum_iw_i\delta_i^2/(\sum_iw_i\delta_i^2+\epsilon_0)$ in the projection expansion \\

\addlinespace[2pt]
\multicolumn{2}{@{}l}{\textbf{Operators and mathematical conventions}} \\
$\mathbb R,i$ & real numbers and the imaginary unit in complex exponentials; elsewhere $i$ is an explicitly scoped index \\
$\arg,\operatorname{wrap},\operatorname{sign}$ & principal complex argument, wrapping to $[-\pi,\pi)$, and sign in $\{-1,0,1\}$ \\
$\mathbb E,\Pr,\operatorname{std}_j$ & expectation, probability, and standard deviation across wrong keys with divisor $m$ \\
$\mathrm{BCE},\|\cdot\|_2,O(\cdot)$ & binary cross-entropy, Euclidean norm, and asymptotic order of a remainder \\
$\mathbf1[\cdot],\land,\lor$ & indicator (one when its condition holds), logical AND, and logical OR \\
$\cap,\cup,\in$ & set intersection, union, and membership \\

\addlinespace[2pt]
\multicolumn{2}{@{}l}{\textbf{Evaluation and figure conventions}} \\
$n$ & number of observations for the stated experiment; captions specify the counting unit \\
$q,p$ & JPEG quality setting and augmentation application probability; $q_i$ in the theory instead denotes a cosine reading \\
$\epsilon,L_\infty$ & attack budget ($8/255$) and maximum absolute pixel-change norm on images scaled to $[0,1]$ \\
$\uparrow,\downarrow,\pm$ & larger/smaller is preferable; mean plus/minus standard deviation where specified by the caption \\
PSNR, SSIM & peak signal-to-noise ratio in decibels and structural similarity, relative to the original/reference image \\
ROC, AUC & receiver operating characteristic and area under that curve \\
TPR, FPR, FRR & true-positive rate, false-positive rate, and benign false-rejection rate \\
JPEG $q90$, blur $k7$ & JPEG quality 90 and Gaussian blur with a $7\times7$ kernel; $k7$ is a channel label, not Fourier-bin index $k$ \\
FF++, DFDC, CDF & FaceForensics++, Deepfake Detection Challenge, and Celeb-DF v2; CelebDF is also used as a dataset label \\
CRF, QP & constant-rate-factor setting and quantization parameter in video-codec conditions \\

\bottomrule
\end{longtable}
\renewcommand{\arraystretch}{1.0}

\section{Related Work}
\label{sec:related}

\noindent\textbf{Provenance credentials.}
C2PA Content Credentials bind signed assertions about an asset's issuer and history to its content~\citep{c2pa}. Their validation depends on certificate handling, timestamps, and validator behavior, as studied for C2PA v2.2~\citep{Golaszewski2026C2PA}. Content-carried signals can reconnect circulating copies with provenance information after metadata loss~\citep{Simmons2024Interop}. \method complements this infrastructure with a private reference for face-level comparison against an enrolled release.

\noindent\textbf{Reference-based face verification.}
Identity embeddings such as ArcFace assess whether two faces depict the same person~\citep{Deng2019ArcFace}. Reference-based forensics also models person-specific behavior: ID-Reveal learns temporal facial signatures~\citep{Cozzolino2021IDReveal}, while Referee uses a reference video as an audiovisual identity anchor~\citep{Boo2026Referee}. More recently, regional identity and texture comparisons address identity-preserving generative manipulations~\citep{Ming2026Reference}. These approaches establish an important reference-based alternative to query-only detection. \method instead associates the reference with a particular released asset and combines face consistency with evidence of its keyed release signal.

\noindent\textbf{Content authentication.}
Fragile watermarking detects content changes, while semi-fragile designs tolerate specified processing such as JPEG compression~\citep{Fridrich1999Tamper,Lin2000SemiFragile}. Recent methods couple authentication evidence to image content: MetaSeal uses content-dependent cryptographic watermarks for attribution~\citep{Zhou2026MetaSeal}, and DeepSignature combines signed content-encoding watermarks with latent-space tamper verification~\citep{Graf2026DeepSignature}. Thus, content-dependent verification is an established direction. Our formulation centers on publisher-held spectral records and separate face-relation and release-key evidence under calibrated processing; the reference controls assess what the key adds beyond image and identity similarity.

\noindent\textbf{Image-only face analysis.}
Image-only detectors identify manipulation traces through spatial and frequency cues~\citep{Afchar2018MesoNet,Qian2020Frequency,Luo2021Generalizing}, temporal consistency, and cross-modal evidence~\citep{Haliassos2021LipForensics,Zhou2021AVoiD}. Generalization approaches include diffusion-guided disentanglement~\citep{Sun2024DiffusionFake}, multi-generator discrepancy learning~\citep{Yang2025D3}, and multimodal pretraining~\citep{Jung2026SD2}. Related attribution methods identify source generators under closed- or open-world assumptions~\citep{Yang2022DeepFakeAttribution,Zheng2026CAL}. In a publisher workflow, these query-only signals can support review alongside the release-specific evidence supplied by a publisher-held record.

\noindent\textbf{Watermarking and provenance.}
Classical and neural watermarking encode ownership or provenance signals before distribution~\citep{Cox1997Secure,Zhu2018HiDDeN,Tancik2020StegaStamp,Fernandez2023StableSignature}. Robust schemes preserve recoverable evidence through editing~\citep{Wen2023TreeRing,Lu2025VINE,Sander2025WAM,Bui2025TrustMark}. SpecGuard embeds messages through spectral projections in learned features~\citep{Alam2025SpecGuard}. Artificial fingerprinting embeds identifying signals in generator training data for later attribution~\citep{Yu2021Responsible}. Our spectral design relates to phase-domain watermarking~\citep{ORuanaidh1996Phase} and proactive release templates~\citep{Asnani2022Proactive}. Tree-Ring~\citep{Wen2023TreeRing} embeds a Fourier-ring pattern in diffusion latents during generation; by comparison, \method edits an existing image's phase and retains a private record for later comparison.

\noindent\textbf{Proactive deepfake defenses.}
FaceSigns~\citep{Neekhara2024FaceSigns} and SepMark~\citep{Wu2023SepMark} use semi-fragile decoder or tracer outputs. Identity-WM~\citep{zhao2023proactive}, LampMark~\citep{Wang2024LampMark}, and LIDMark~\citep{Wu2026LIDMark} connect embedded signals to identity or landmarks. FaceGuard detects manipulation through disagreement between extracted and expected watermarks~\citep{Yang2021FaceGuard}. MaLP~\citep{asnani2023malp} and FractalForensics~\citep{Wang2025FractalForensics} localize manipulated regions, while multi-face proactive forensics identifies affected faces and identities~\citep{Zhang2026DAWF}. FaceProtect derives dynamic watermarks from facial features~\citep{Lan2024FaceProtect}. RecoverMark~\citep{An2026RecoverMark} embeds recoverable face information in surrounding regions; TFPDD~\citep{Lai2026TFPDD} embeds an identity-aware hash without training. Within this family, \method reads face-relation agreement and release-key retention separately from a private spectral record. Its analytic readings require no fitted model weights, and a learned record model combines their features for joint decisions.

\section{Analysis of Relation and Key Evidence}
\label{sec:score_model}
\label{app:theory_full}
Spectral agreement measures preservation of the enrolled phase relation, while signed projection measures alignment with the release signal. We analyze these complementary roles under explicit noise and dependence assumptions, then compare their scales with the observed wrong-key distribution. Thresholds in the experiments are calibrated empirically.

\subsection{Original-Reference Agreement Concentration}
\label{app:detectability}

For $i=k\in\mathcal{K}_{\rm eff}$, let
$q_i=\cos r_i(\mathbf{J})$ and normalize the amplitude weights as
$a_i=w_i/\sum_j w_j$, so that $H=\sum_i a_iq_i$. If the weighted coefficients were independent,
the natural effective sample size would be
\begin{equation}
n_w
=
\frac{\left(\sum_i w_i\right)^2}{\sum_i w_i^2}
=
\frac{1}{\sum_i a_i^2}.
\label{eq:weighted_ess}
\end{equation}
To include exact dependence within conjugate pairs or larger groups, partition the active bins into blocks $B_b$. Let $a_b=\sum_{i\in B_b}a_i$ and $Q_b=a_b^{-1}\sum_{i\in B_b}a_iq_i$. Then $H=\sum_ba_bQ_b$, $Q_b\in[-1,1]$, and $n_{\rm eff}=1/\sum_ba_b^2\le n_w$.

\begin{proposition}[Separation under independent spectral blocks]
\label{prop:margin}
Let $\mu_c$ be a nominal benign-score mean, $b_a\ge0$ an allowance for a reduction in that mean (e.g., alignment error), and $\mu_d$ an upper bound on the re-rendered mean. Let $\tau$ be the acceptance cutoff on $H$. Assume the block variables $Q_b$ are independent under each hypothesis. If $\mathbb E[H\mid\mathcal H_c]\ge\mu_c-b_a$, $\mathbb E[H\mid\mathcal H_d]\le\mu_d$, and $\mu_d<\tau<\mu_c-b_a$, then
\begin{align}
\Pr(H<\tau\mid\mathcal H_c)&\le\exp[-(\mu_c-b_a-\tau)^2n_{\rm eff}/2],\label{eq:rel_fn}\\
\Pr(H\ge\tau\mid\mathcal H_d)&\le\exp[-(\tau-\mu_d)^2n_{\rm eff}/2].\label{eq:rel_fp}
\end{align}
\end{proposition}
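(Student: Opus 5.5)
The plan is to apply Hoeffding's inequality to the weighted sum $H=\sum_b a_bQ_b$, treating the blocks as the independent summands. Under either hypothesis, the block variables $Q_b$ are independent by assumption. Each is bounded, $Q_b\in[-1,1]$, because it is a convex combination of cosines: the weights $a_i/a_b$ are nonnegative and sum to one over $B_b$. The weights $a_b$ are deterministic functions of the stored original amplitudes, not of the query. So $a_bQ_b$ ranges over an interval of length $2a_b$.

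Hoeffding's inequality for a sum $S=\sum_b X_b$ of independent $X_b\in[l_b,u_b]$ gives, for $t>0$,
\begin{equation*}
\Pr\bigl(S-\mathbb E S\le -t\bigr)\le\exp\Bigl[-2t^2\Big/\textstyle\sum_b(u_b-l_b)^2\Bigr],
\end{equation*}
and the same bound holds for the upper tail. Here $\sum_b(2a_b)^2=4\sum_ba_b^2=4/n_{\rm eff}$, so each tail is at most $\exp[-t^2n_{\rm eff}/2]$. This is exactly the exponent appearing in \eqref{eq:rel_fn}--\eqref{eq:rel_fp}.

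For \eqref{eq:rel_fn}, we work under $\mathcal H_c$. The event $H<\tau$ implies $H-\mathbb E[H]<\tau-\mathbb E[H]\le\tau-(\mu_c-b_a)$, using the mean lower bound. Setting $t=\mu_c-b_a-\tau$, which is positive because $\tau<\mu_c-b_a$, the lower-tail bound gives the claim.

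For \eqref{eq:rel_fp}, we work under $\mathcal H_d$. The event $H\ge\tau$ implies $H-\mathbb E[H]\ge\tau-\mu_d=:t>0$, using $\mathbb E[H]\le\mu_d$ and $\tau>\mu_d$. The upper tail then gives the claim. The non-strict versus strict inequalities cause no issue, since Hoeffding bounds the closed event.

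There is no real obstacle, but one point needs care: checking that the bounding interval has length $2a_b$ rather than $a_b$. Using $[0,1]$ would give the sharper constant $2n_{\rm eff}$, but the cosines can be negative, so $[-1,1]$ and the factor $1/2$ are appropriate. We should also remark that independence is needed only across blocks, which is why conjugate-pair dependence is absorbed into $B_b$. Finally, $n_{\rm eff}\le n_w$ follows from Cauchy–Schwarz, namely $a_b^2\ge\sum_{i\in B_b}a_i^2$, so the block bound is the honest, weaker version of the independent-bin rate.
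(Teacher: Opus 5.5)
Your proposal is correct and matches the paper's proof: Hoeffding's inequality applied to $H=\sum_b a_bQ_b$, with block ranges of length $2a_b$, gives tails $\exp[-t^2n_{\rm eff}/2]$, and substituting the two mean-to-threshold gaps yields both bounds. One small aside on your closing remark: $a_b^2\ge\sum_{i\in B_b}a_i^2$ follows from the nonnegative cross terms $a_ia_j\ge0$, not from Cauchy--Schwarz, which instead gives the upper bound $a_b^2\le|B_b|\sum_{i\in B_b}a_i^2$.
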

\begin{proof}
Each weighted block $a_bQ_b$ has range length $2a_b$. For any deviation $t>0$, the one-sided bounded-sum inequality gives $\Pr(H-\mathbb E H\ge t)\le\exp[-t^2/(2\sum_ba_b^2)]$, with the same bound for the lower tail. Substituting the two mean-to-threshold gaps yields the result.
\end{proof}
This model explains how weighting and dependence change the available evidence. However, it does not establish independent blocks for real codec residuals or estimate $n_{\rm eff}$ for the evaluated images. We therefore calibrate thresholds empirically and do not use this bound to claim a numerical false-accept rate. Ring cardinality is a support size, not a measure of cryptographic key entropy.

\subsection{Signed Release-Key Projection}
For an ideal unquantized relation-preserving query, write
\[
r_i(\mathbf{J})=\delta_i+\xi_i,
\]
where all sums in this local model run over active bins. Here $i=k$ is a frequency index (not a face-slot index), $\delta_i=\delta_s(k)$ is the enrolled keyed displacement,
$w_i=w_k$, and $\xi_i$ is weakly dependent residual noise with approximately zero mean and uncorrelated with the enrolled displacement. This local model describes the continuous phase response. Since pixel truncation adds signal-dependent changes, verification uses the encoded release's measured projection for normalization.
Substituting into Eq.~\eqref{eq:key_projection} gives
\begin{equation}
G_s(\mathbf{J};R)
=
\frac{\sum_i w_i\delta_i\sin(\delta_i+\xi_i)}
{\sum_i w_i\delta_i^2+\epsilon_0}.
\end{equation}
For small $|\delta_i+\xi_i|$,
\begin{equation}
G_s(\mathbf{J};R)
=
\rho
+
\frac{\sum_i w_i\delta_i\xi_i}
{\sum_i w_i\delta_i^2+\epsilon_0}
+
O\!\left(
\frac{\sum_i w_i|\delta_i|\,|\delta_i+\xi_i|^3}
{\sum_i w_i\delta_i^2+\epsilon_0}
\right),
\end{equation}
where
\begin{equation}
\rho=
\frac{\sum_i w_i\delta_i^2}
{\sum_i w_i\delta_i^2+\epsilon_0}
\approx 1 .
\end{equation}

Here $\rho$ is the attenuation caused by the denominator stabilizer, and $O(\cdot)$ denotes a remainder bounded in order by its argument as the residuals approach zero. Thus the expected correct-key projection stays near its enrolled value under the local noise model; concentration additionally requires control of the weighted noise variance.
For an independently sampled wrong key $s^{-}$, the signed displacement directions are uncorrelated,
so their contributions cancel in expectation:
\begin{equation}
\mathbb{E}\!\left[G_{s^{-}}(\mathbf{J};R)\right]\approx 0,
\qquad
\mathbb{E}\!\left[G_s(\mathbf{J};R)\right]\approx \rho .
\label{eq:key_expect}
\end{equation}
The paired correct-versus-wrong-key experiment tests exactly this distinction by holding the image fixed and changing
only the projection key.

\subsection{Why Ordinary Cosine Is Weakly Key-Sensitive}
\label{app:small_displacement}

Using the same residual model, now assume channel noise is independent of the key direction. Comparing the displaced residual with its undisplaced counterpart gives
\begin{align}
\cos(\xi_i+\delta_i)-\cos\xi_i
&=
-\delta_i\sin\xi_i
-\frac{1}{2}\delta_i^2\cos\xi_i
+O(|\delta_i|^3).
\label{eq:cosine_noise_expansion}
\end{align}
If the residual noise is approximately symmetric about zero, then
$\mathbb{E}[\sin\xi_i]\approx0$, so the first-order contribution cancels in expectation:
\begin{equation}
\mathbb{E}\!\left[
\cos(\xi_i+\delta_i)-\cos\xi_i
\right]
=
-\frac{1}{2}\delta_i^2\,
\mathbb{E}[\cos\xi_i]
+O(|\delta_i|^3).
\label{eq:cosine_second_order}
\end{equation}
For small residual noise, $\mathbb{E}[\cos\xi_i]\approx1$, leaving a key-dependent cosine change
of order $\delta_i^2/2$. Ordinary cosine therefore loses the sign of the weak displacement near
alignment, whereas the signed key projection multiplies by the known displacement direction before
aggregation and retains first-order information.

At the deployed settings, $|\delta_{\rm eff}|$ is approximately $0.0235$ rad on FF++/Celeb-DF and $0.0071$ rad on DFDC, giving second-order cosine terms $\delta_{\rm eff}^2/2$ of $2.76\times10^{-4}$ and $2.52\times10^{-5}$, respectively. This explains the weak local cosine response to the release signal. In the paired wrong-key control, all inputs to $H$ and $C$ stay fixed, so their invariance is exact rather than an approximation at small displacement.

\subsection{Wrong-Key Null}
\label{app:wrong_key_null}

Equation~\eqref{eq:wrong_key_stats} uses $m{=}8$ sampled wrong keys as an empirical null rather than
as a security bound. An independent key produces a signed displacement pattern uncorrelated with
the enrolled one, so its projection averages around zero over the active support. The median
active-bin counts are $19{,}824$, $32{,}461$, and $157{,}684$ on
\textsc{FF++}, \textsc{Celeb-DF}, and \textsc{DFDC}, respectively.

The empirical wrong-key mean is within $2\times10^{-4}$ of zero on each dataset; median per-asset spreads are $0.017$, $0.012$, and $0.006$, respectively. The enrolled lossless query has release-normalized projection $1$ by construction. These sampled distributions do not resolve the far-tail false-accept probability.

\section{Experimental Setup and Reproducibility}
\label{app:experimental_setup}
This section specifies enrollment, model training, query generation, and score conventions used by the supplementary experiments.

\noindent\textbf{Enrollment cohort and crop geometry.}
From $24{,}527$ source frames across \textsc{FF++},
\textsc{DFDC}, and \textsc{Celeb-DF}~\citep{Rossler2019FFpp,Dolhansky2020DFDC,Li2020CelebDF},
faces are located with SCRFD~\citep{Guo2022SCRFD} and accepted only when YuNet or a Haar-cascade face detector also locates an overlapping face box ($24{,}209$ accepted). We center native-resolution crops on the detected face, using width and height ratios $0.224$ and $0.334$ relative to the source frame, and save lossless PNGs without padding or resizing. We partition \textsc{Celeb-DF} by source video to prevent video-level overlap between splits. Each of the $7{,}366$ enrolled real
faces receives its own released image and a private NPZ record storing the enrolled phase, original
phase and amplitude, keyed displacement, active bins, and the release's own key projection.
Record-based verification reads the query and this stored record; it does not reload the original image.

\subsection{Hyperparameters and Training Setup}
\label{app:hyper}

\noindent\textbf{Enrollment and key settings.}
Enrollment writes the keyed displacement at $\alpha=0.05$, so the pre-blend shift is
$0.05\pi$\,rad, with blend weight $\beta=0.15$ on the ring
$\kappa\in[0.10,0.45]$ of normalized frequency
($\alpha=0.015$ for the \textsc{DFDC} release configuration).
Every face passed at the first rung, so these are also the per-face
$(\alpha_i,\beta_i)$ stored in the record, where $i$ indexes the enrolled face.
The key $s$ is the first $64$ bits of SHA-256 over a fixed tag, a master seed, and the sample
identifier; it seeds NumPy's PCG64 generator, whose standard-normal draws $u_s(k)$ over the Fourier
grid give $h=\operatorname{sign}(u_s(k)-u_s(-k))$. Bins with $k=-k$ get $h=0$ and lie outside
$\mathcal{K}_{\rm eff}$. The amplitude floor is $A_{\min}=10^{-6}\max_k A_I(k)$ on the luminance
spectrum of the aligned face; it removes at most $2\%$ of a face's ring bins (median below $0.05\%$).
The wrong-key comparison draws $m{=}8$ keys; feature extraction uses the full ring and $8$ radial bands. The MLP bands use normalized Euclidean distance from the center of the shifted array, divided into eight intervals over the active support. For rectangular crops this differs from the cycles-per-pixel radius $\kappa$ used for enrollment and direct sub-band selection.

\noindent\textbf{Record model and features.}
The feature map $\Phi$ in Eq.~\eqref{eq:record_features} supplies the record model with $75$ inputs: $11$ global
readings and $8$ readings in each of the $8$ radial bands ($11+8\times8=75$). All key projections in this feature list use the release normalization of Sec.~\ref{sec:verifier}; bare $r$ and $\delta$ denote the binwise residual $r_k$ and displacement $\delta_s(k)$. The full-ring block $\mathbf f_0$ contains
$H$, $C$, the amplitude-weighted mean and unweighted standard deviation of the wrapped residual $r$, the weighted mean of
$\cos r$, $G_s$, $\overline{G^{-}}$, $G^{-}_{\max}$, $G_{\Delta}$, $G_z$, and the weighted mean of
$\delta\sin r$. Each block $\mathbf f_\ell$ has $H$, $C$, $G_s$, a band key-distance column, the weighted residual mean and
unweighted standard deviation, and the weighted means of $\cos r$ and $\delta\sin r$. Because the mean of $\cos r$ equals $H$
and the band key-distance column equals that band's $G_s$, only $58$ of the $75$ inputs are distinct.
Each input is standardized with its mean and standard deviation over the training split of the
training dataset (benign, swap, and wrong-key rows), giving $\boldsymbol\mu$ and $\boldsymbol\sigma$ in Eq.~\eqref{eq:record_model_map}. Standard deviations below $10^{-6}$ are replaced by one, and missing feature values are imputed with the corresponding training mean. These statistics and the trained model are frozen during test scoring. The default $\Phi(\mathbf J,R)$ uses the enrolled key; wrong-key training rows substitute the candidate projection direction while keeping the query, original phase, and template fixed.

\par\medskip\noindent\textbf{Training targets.}\phantomsection
\label{app:random_variables}
The record model learns from benign, re-rendered, and wrong-key queries, with the targets in Eq.~\eqref{eq:loss_core} assigned according to query construction. This supervision maps spectral features to relation, key, and manipulation scores for the decision policy. The outputs serve as decision scores rather than calibrated posterior probabilities.

\noindent\textbf{Architecture and optimization.}
The MLP has two 64-unit hidden layers with ReLU activations, three scalar output heads, and two eight-dimensional linear projections for the agreement loss. Sigmoid converts each output logit to its reported score; training evaluates BCE directly on logits for numerical stability.
The record model is trained for $1{,}000$ AdamW steps (learning rate $10^{-3}$, weight decay
$10^{-4}$) on batches of $1{,}024$ rows drawn with replacement, with $\lambda_{\rm rel}=1$,
$\lambda_{\rm key}=0.3$, $\lambda_{\rm agr}=0.1$, and three random seeds, using the objective in Eq.~\eqref{eq:loss_core}.

\noindent\textbf{Agreement-penalty ablation.}
To assess the auxiliary penalty, we train with $\lambda_{\rm agr}=0$ and $0.1$ using matched seeds and FF++ training data. Face-swap AUCs are $0.99972/0.99970$ on FF++, $0.99618/0.99664$ on DFDC, and $0.97385/0.97535$ on Celeb-DF. Removing the penalty changes joint swap acceptance by $0/-0.105/+0.159$ percentage points under target-dataset first-percentile calibration. The penalty has a small, setting-dependent effect. The projections are auxiliary: equal outputs, including zero, minimize their distance independently of the classification scores.

\noindent\textbf{Image-only control and calibration.}
The image-only control uses ResNet-50 at $224^2$. Within-dataset evaluations train on the corresponding dataset; transfer evaluations retain FF++-trained weights. Training uses the main training splits with the following augmentation:
JPEG quality $q=35$--$80$ with application probability $p=0.4$, Gaussian blur with kernel width $3$--$7$ pixels at $p=0.2$, Gaussian noise, and
HSV jitter, all applied at native resolution before resizing. The joint-policy thresholds $\tau_{\rm rel}$ and $\tau_{\rm key}$ use the first percentiles of clean benign training scores, as detailed in App.~\ref{app:joint_policy}. Channel experiments separately calibrate the direct relation and review cutoffs on transformed benign training queries. The threshold-transfer diagnostic in App.~\ref{app:evt_transfer} fits quantiles to test scores separately from the main evaluation thresholds.

\subsection{Evaluation Protocol}
\par\medskip\noindent\textbf{Re-rendering engines.}\phantomsection
\label{app:engines}
Every engine operates on the released image and is scored against that image's own record, and all use
released weights without retraining. SimSwap~\citep{Chen2020SimSwap} supplies the main re-rendered
query; the four re-rendering columns add StarGAN-v2~\citep{Choi2020StarGANv2} (authors' CelebA-HQ EMA checkpoint, $100$k
steps), UniFace~\citep{Xu2022UniFace} (released $500$k-step swap checkpoint), InsightFace \texttt{inswapper\_128} (released
ONNX), and Stable Diffusion v1.5~\citep{Rombach2022StableDiffusion} inpainting over the detected face box ($25$ steps, guidance
$7.5$, fixed seed, no identity donor, so the face is re-rendered by the diffusion prior alone).
The inpainting mask uses a face box detected on the released image with InsightFace's \texttt{buffalo\_l} detector. StarGAN-v2 and SD inpainting also cover all $636$ DFDC and $209$ Celeb-DF test releases with unchanged generation settings; Table~\ref{tab:operating_points} reports their direct verification results. The feature table names $G_{\Delta}$ \texttt{G\_margin}.
For Table~\ref{tab:roundtrip}, UniFace reenactment uses the separate released $1$M-step checkpoint: the enrolled release supplies appearance, while the existing paired attack-source image drives pose and expression. Inputs are resized to $256\times256$. The same $200$ enrollments, delivery channels, and benign-calibrated cutoffs are used for SimSwap, reenactment, and SD inpainting; only SimSwap has clip-level evaluation.

\par\medskip\noindent\textbf{Peer training settings.}
FaceSigns, SepMark, KAD-Net, WAM, and VINE-R use one fixed public encoder--decoder checkpoint per method across all datasets in Table~\ref{tab:peer_detect}. Their first three columns identify test datasets, not training sources. StegaStamp, LIDMark, and our learned controls use dataset-specific weights; transfer fixes source-trained weights for target-dataset evaluation, including the watermark encoder and decoder. Direct template comparison has no trained weights, so its transfer columns repeat the corresponding target-dataset evaluation. Record-model AUCs average three seeds and use only benign and swapped queries; wrong-key rows belong to the separate key-discrimination tests.

\noindent\textbf{Peer-score conventions.}\phantomsection
\label{app:units}
Peer scores have different units: message bit-error rates for FaceSigns, StegaStamp, and KAD-Net; decoder disagreement for SepMark; landmark distance in pixels for LIDMark; and cosine mismatch for our direct test. We therefore compare score order through ROC-AUC, which is invariant to strictly increasing rescaling. Reported signals are oriented toward larger values for suspicious queries, and their means were checked for each method--engine pair. KAD-Net uses its fragile decoder's bit-error rate, whereas SepMark uses its dual-decoder disagreement.

Under StarGAN-v2, LIDMark's mean decoded landmark distance reaches $2.4\times10^{14}$ pixels, compared with $13.1$ on benign queries. Its AUC of $1.000$ reflects separation driven by extreme decoder outputs, not accurate or numerically stable landmark recovery.

\noindent\textbf{Credential retention.} C2PA integration is proposed. The credential-retention experiment uses HMAC-SHA256-authenticated JSON in PNG text or JPEG comment fields to isolate metadata survival; it does not implement a C2PA manifest or validator. When this metadata is stripped, comparison remains possible only if another trusted route resolves the corresponding record.

\section{Release Fidelity and Visual Examples}
\label{app:fidelity_table}

Table~\ref{tab:fidelity} compares release distortion on a shared external pool at three delivery resolutions. Methods use their benchmark configurations with different payload and reference budgets. The following examples show release edits, benign-channel queries, and re-renderings.

Across all $7{,}366$ enrollments, \method records an average PSNR of $51.20$\,dB
($51.22/51.15/51.21$ on \textsc{FF++}/\textsc{DFDC}/\textsc{Celeb-DF}, respectively), with a
minimum of $49.3$\,dB and SSIM rounding to $1.00$ at two decimal places. These values are measured from the released PNG after
writing to and reading back from disk.

\begin{table}[!htbp]
\centering
\footnotesize
\setlength{\tabcolsep}{2pt}\renewcommand{\arraystretch}{1.12}
\caption{\textbf{Release fidelity across datasets and resolutions.} PSNR\,$\uparrow$ / SSIM\,$\uparrow$ on $40$ held-out faces per dataset, resized to $128$/$256$/$512$ pixels. All methods use the same crops and are measured against the corresponding original; payload sizes differ across methods.}
\label{tab:fidelity}
\begin{tabular*}{\linewidth}{@{\extracolsep{\fill}}l ccc ccc ccc@{}}
\toprule
\multicolumn{10}{c}{PSNR (dB) $\uparrow$ / SSIM $\uparrow$} \\
\midrule
& \multicolumn{3}{c}{\textsc{FF++}}
& \multicolumn{3}{c}{\textsc{Celeb-DF}}
& \multicolumn{3}{c}{\textsc{DFDC}} \\
\cmidrule(lr){2-4}
\cmidrule(lr){5-7}
\cmidrule(lr){8-10}
Method
& $128$ & $256$ & $512$
& $128$ & $256$ & $512$
& $128$ & $256$ & $512$ \\
\midrule
StegaStamp
& 29.2/0.98 & 28.8/0.98 & 28.8/0.98
& 29.2/0.98 & 28.9/0.97 & 28.7/0.97
& 29.9/0.98 & 29.4/0.98 & 29.5/0.98 \\

FaceSigns
& 36.2/1.00 & 34.1/1.00 & 35.2/1.00
& 35.3/1.00 & 34.0/1.00 & 34.7/1.00
& 34.6/0.99 & 33.3/0.99 & 33.9/0.99 \\

SepMark
& 41.4/1.00 & 39.7/1.00 & 40.2/1.00
& 41.1/1.00 & 39.7/1.00 & 40.1/1.00
& 41.1/1.00 & 39.4/1.00 & 39.9/1.00 \\

EditGuard
& 41.5/1.00 & 38.4/1.00 & 37.0/1.00
& 38.9/1.00 & 38.1/1.00 & 35.5/1.00
& 40.2/1.00 & 37.4/1.00 & 36.3/1.00 \\

LampMark
& 35.0/1.00 & 32.6/0.99 & 33.7/1.00
& 35.4/1.00 & 34.5/1.00 & 34.9/1.00
& 36.4/1.00 & 33.5/0.99 & 34.6/0.99 \\

KAD-Net
& 43.7/1.00 & 41.1/1.00 & 41.9/1.00
& 43.3/1.00 & 40.9/1.00 & 41.5/1.00
& 43.4/1.00 & 40.8/1.00 & 41.6/1.00 \\

LIDMark
& 42.6/1.00 & 41.3/1.00 & 41.8/1.00
& 36.6/1.00 & 35.2/1.00 & 35.7/1.00
& 42.7/1.00 & 41.2/1.00 & 41.8/1.00 \\

Ours
& 51.2/1.00 & 51.2/1.00 & 51.4/1.00
& 51.3/1.00 & 51.6/1.00 & 51.2/1.00
& 51.1/1.00 & 51.2/1.00 & 51.2/1.00 \\
\bottomrule
\end{tabular*}
\end{table}

\phantomsection
\label{app:supp}
\noindent\textbf{Visual comparisons.}
Fig.~\ref{fig:release_residual} makes the weak release signal visible through amplified residuals. Fig.~\ref{fig:additional_examples} then shows how spectral agreement responds to processing and face replacement across datasets.

\begin{figure}[htbp]
    \centering
    \includegraphics[width=0.82\linewidth]{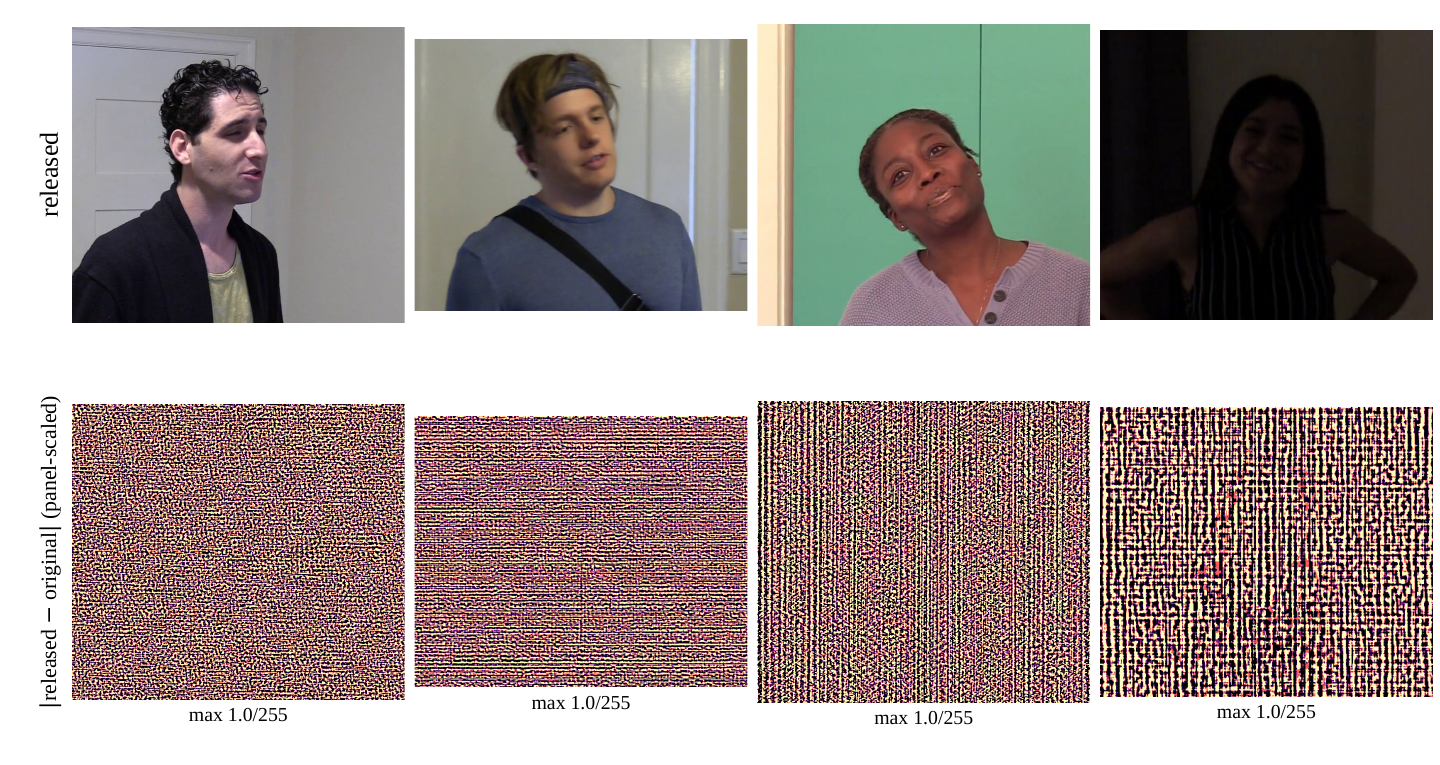}
    \caption{\textbf{Release residuals.} Differences from the original images are scaled separately for visibility. The displayed edits have maximum magnitude $1/255$ and form a spatially diffuse pattern.}
    \label{fig:release_residual}
\end{figure}

\begin{figure}[htbp]
    \centering
    \includegraphics{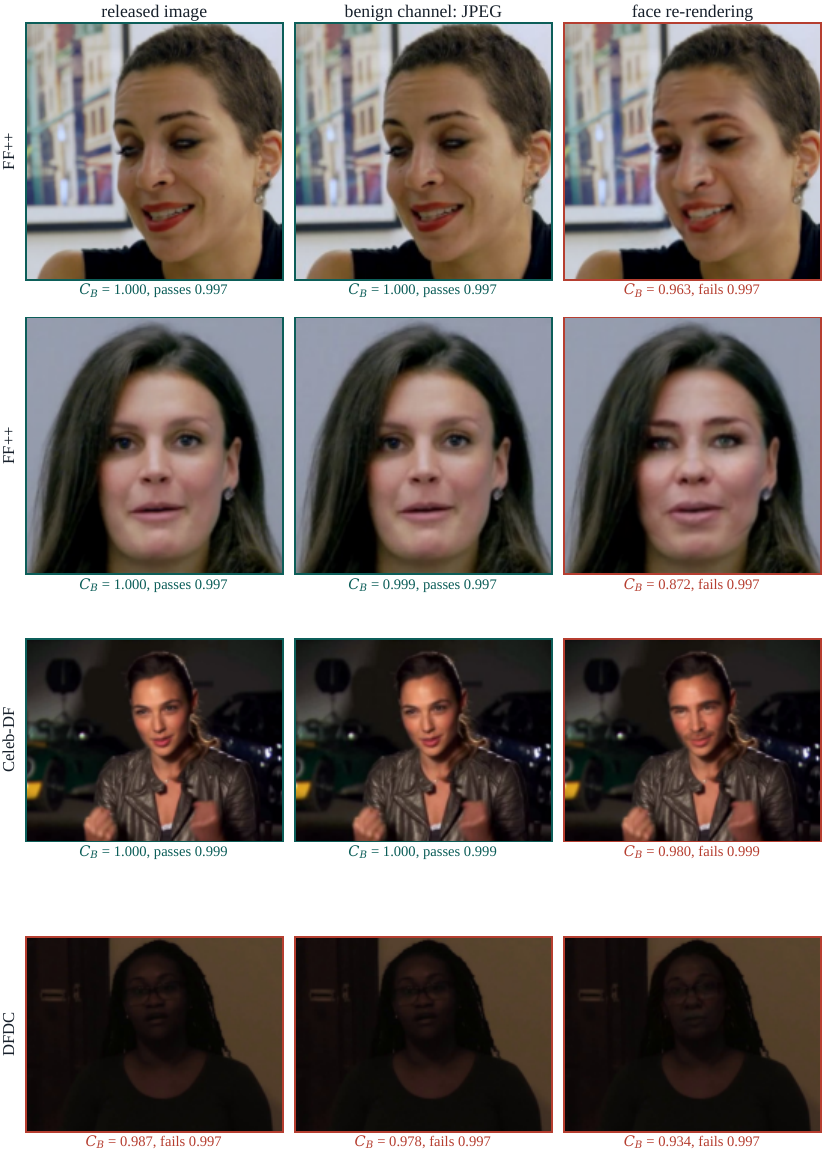}
    \caption{\textbf{Verification examples across datasets.} Released images and JPEG-$q85$ copies retain higher sub-band template agreement $C_B$ than SimSwap queries in these examples. Cutoffs use $1\%$ benign training rejection per dataset; green and red borders indicate relation acceptance and rejection.}
    \label{fig:additional_examples}
\end{figure}

\par\medskip\noindent\textbf{Key pattern and residual.}\phantomsection
\label{app:supp_keymatch}
Fig.~\ref{fig:keymatch} makes the key--template relation visible on one enrolled asset. The key is
a seed that expands into a sign pattern over the ring (a); $(\alpha,\beta)$ only set its size,
producing the keyed displacement $\delta_s$ (b), and the template is the face's own phase carrying that
displacement (c). A benign query's residual against the original phase looks like noise (d),
but its \emph{signs} echo the key's pattern: $82\%$ of active bins agree with the correct key's
signs ($G_s=1.08$, e) against $49\%$, chance, for a wrong key ($G_s=0.01$, f). Here $G_s$ is the unnormalized projection of Eq.~\eqref{eq:key_projection}; panel (f) abbreviates $G_{s^-}$ as $G$ for the wrong key $s^-$. The panel symbols $\delta$ and $r$ denote $\delta_s(k)$ and $r_k(\mathbf J)$ on the displayed bins. The signed projection measures this alignment with the keyed displacement.

\begin{figure}[htbp]
    \centering
    \includegraphics[width=0.92\linewidth]{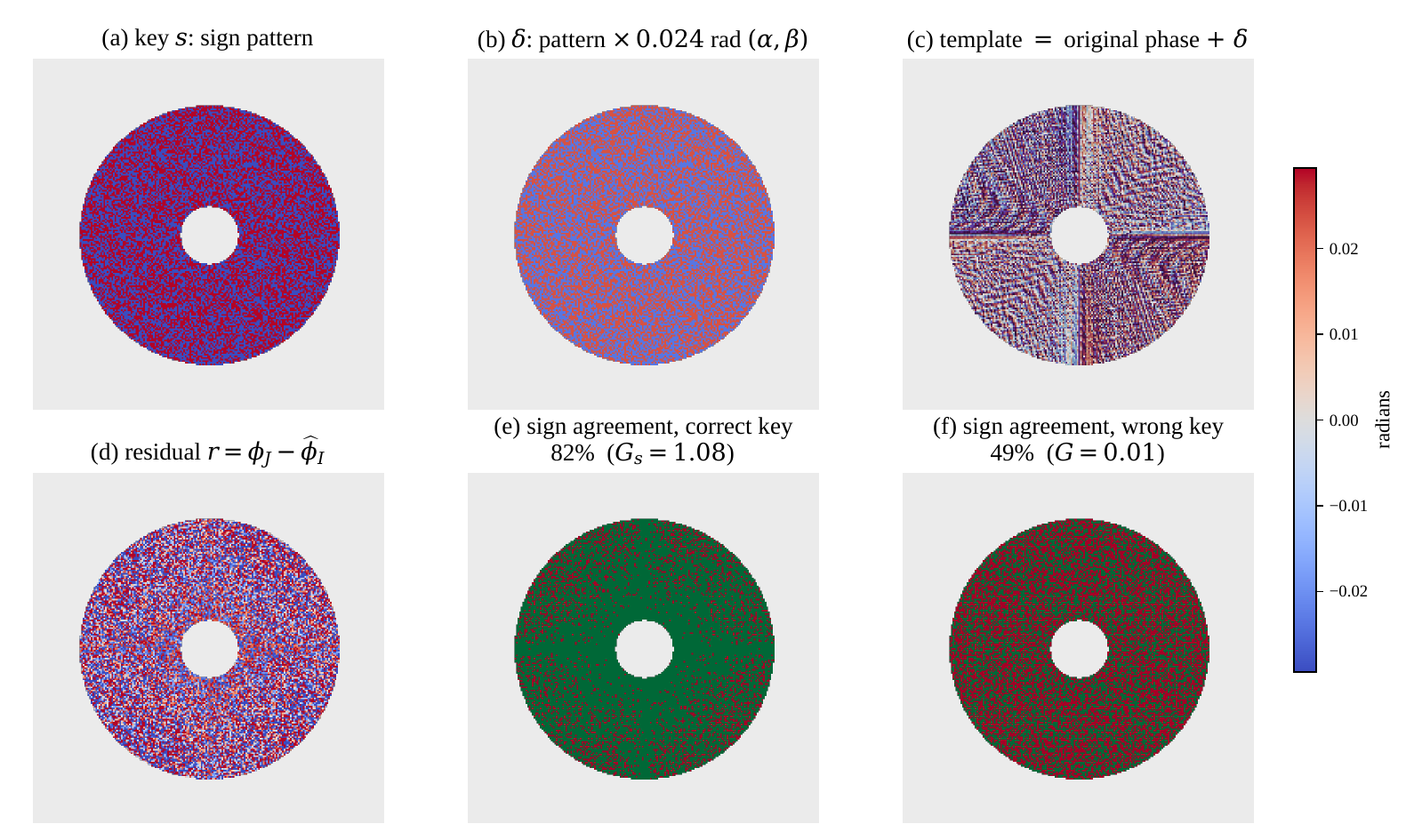}
    \caption{\textbf{Keyed displacement and query residual.} One \textsc{FF++} test asset, showing the central region of the shifted spectrum; gray marks bins outside the active support. Top: the key pattern, displacement, and enrolled template. Bottom: the query residual and its alignment with correct and wrong keys.}
    \label{fig:keymatch}
\end{figure}

\section{Private References and Release-Key Evidence}
\label{app:reference_controls}

Because the direct relation test uses a private per-asset reference, the appropriate controls must
receive the same privilege. The original-phase control reads the undisplaced reference but uses the same keyed query images; it does not evaluate an independently generated unmarked-release cohort. Throughout this section, the key projection is
$G_s^{(\mathrm{template})}$ from Sec.~\ref{sec:verifier}, normalized by the stored template's own
projection rather than by the release projection, so released images retain their natural score
spread.

We compare four reference-based readings on the main test queries: template agreement $C$, original-phase agreement $H$, SSIM against the stored released image, and
ArcFace~\citep{Deng2019ArcFace} similarity to the enrolled face. Swap-detection AUC uses mismatch as the positive direction; release-discrimination AUC instead treats the enrolled release as positive.

\begin{table}[h]
\centering
\footnotesize
\setlength{\tabcolsep}{5pt}
\caption{\textbf{Private-reference comparisons.} Main-test AUC ($\uparrow$) for clean face-swap detection (left), discrimination between releases of the same person (middle), and discrimination between the release and its original image before publication (right).}
\label{tab:reference_controls}
\begin{tabular}{@{}l ccc cc@{}}
\toprule
& \multicolumn{3}{c}{Swap separation (clean) $\uparrow$}
& Same-person release $\uparrow$
& Release-version control $\uparrow$ \\
\cmidrule(lr){2-4}\cmidrule(lr){5-5}\cmidrule(lr){6-6}
Verifier & \textsc{FF++} & \textsc{DFDC} & \textsc{CDF}
& \shortstack{Enrolled release vs.\\another release} & \shortstack{Released image vs.\\unreleased original} \\
\midrule
Template agreement $C$
& 1.000 & 0.978 & 0.984 & -- & -- \\
Original-phase reference
& 1.000 & 0.977 & 0.978 & 1.000 & 0.000 \\
Released-image SSIM
& 1.000 & 1.000 & 1.000 & -- & 1.000 \\
ArcFace identity reference
& 1.000 & 1.000 & 1.000 & 0.9999 & 1.000 \\
Key projection $G_s$
& -- & -- & -- & 1.000 & 1.000 \\
\bottomrule
\end{tabular}
\end{table}

Table~\ref{tab:reference_controls} shows that private references already provide strong clean-swap and same-person release discrimination. To isolate what the key adds beyond these reference capabilities, the original control below tests separation of the released image from its visually matched unreleased original under benign processing.

\paragraph{Unreleased original control.}
This experiment asks whether a verifier can identify the published version while tolerating JPEG compression. For each of $400$ held-out enrollments, we compare the original image, the lossless release, and its JPEG-q90 copy. SSIM and ArcFace use the released image as their reference; spectral tests use its private record. JPEG-q90 thresholds are the first percentiles of all $1{,}144$ FF++, $1{,}580$ DFDC, and $1{,}203$ Celeb-DF benign training releases, computed per dataset and reading. Lossless calibration provides a complementary operating point using $100$ training releases per dataset. Calibration and test source videos are disjoint, and all readings are available. Reported rates pool the dataset-specific decisions.

\begin{table}[h]
\centering\small\setlength{\tabcolsep}{3pt}\renewcommand{\arraystretch}{1.15}
\caption{\textbf{Original and release acceptance under channel calibration.} Acceptance rates on $400$ held-out enrollments, using cutoffs fitted to separate lossless or JPEG-$q90$ training releases at a nominal $1\%$ rejection rate. Lower original acceptance and higher release acceptance are desirable.}
\label{tab:posthoc_attack}
\begin{tabular*}{\linewidth}{@{\extracolsep{\fill}}l ccc ccc@{}}
\toprule
& \multicolumn{3}{c}{Lossless calibration} & \multicolumn{3}{c}{JPEG-q90 calibration} \\
\cmidrule(lr){2-4}\cmidrule(lr){5-7}
Reading & \shortstack{Unreleased\\original} $\downarrow$ & Lossless $\uparrow$ & JPEG $\uparrow$ & \shortstack{Unreleased\\original} $\downarrow$ & Lossless $\uparrow$ & JPEG $\uparrow$ \\
\midrule
Original phase $H$ & 100\% & 89.25\% & 32.75\% & 100\% & 100\% & 99.25\% \\
Template $C$ & 100\% & 89.0\% & 27.0\% & 100\% & 100\% & 99.25\% \\
Released-image SSIM & 0\% & 100\% & 0\% & 100\% & 100\% & 99.5\% \\
ArcFace & 0\% & 98.75\% & 0\% & 99.25\% & 100\% & 99.0\% \\
Key projection $G_s$ & 0\% & 98.5\% & 96.0\% & 0\% & 100\% & 99.5\% \\
\bottomrule
\end{tabular*}
\end{table}

Table~\ref{tab:posthoc_attack} illustrates the tension between compression tolerance and version discrimination. Lossless-calibrated SSIM and ArcFace distinguish the original from the release, but reject JPEG-processed releases. JPEG-calibrated reference tests admit processed releases alongside most originals, while $G_s$ rejects every original and accepts $99.5\%$ of JPEG releases. Thus, the keyed signal supplies version evidence that remains useful when reference similarity must tolerate compression.

Release identity and record age are separate questions: later re-enrollment defines a new valid release. Trusted timestamping establishes when a record existed, while an exact hash binds the pristine file. The keyed displacement supplies a distinct, channel-tolerant measurement of release-specific evidence.

\par\medskip\noindent\textbf{Template specificity and privacy.}\phantomsection
\label{app:template}
Fig.~\ref{fig:template} shows the displaced phase template and its query--record comparisons. Across $24$ enrollments, matched pairs score $0.9974\pm0.0046$, versus $0.103\pm0.091$ for $552$ mismatches; the lowest matched score is $0.979$ and highest mismatch $0.426$ (AUC $1.000$). Despite this specificity, unit-amplitude inversion reveals facial structure, so the template remains sensitive image-derived data. This diagnostic does not isolate the causes of cross-dataset transfer.

\begin{figure*}[t]
\centering
\includegraphics[width=0.92\textwidth]{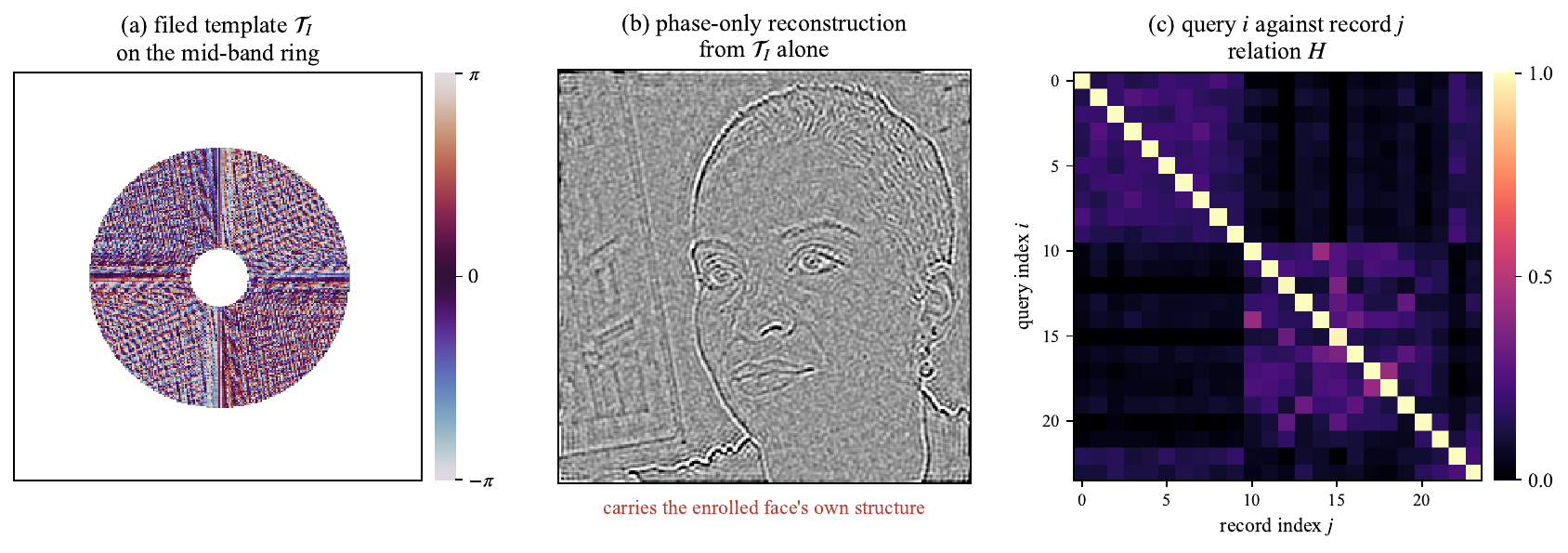}
\caption{\textbf{Enrolled template and asset specificity.} (a) The template $\mathcal T_I$ stores original phase after the keyed displacement on the mid-band ring. (b) Unit-amplitude inversion reveals facial structure, illustrating the template's sensitivity. (c) Original-reference agreement $H$ across $24$ query--record pairs: matched pairs lie on the diagonal and have higher agreement than mismatched pairs.}
\label{fig:template}
\end{figure*}

\par\medskip\noindent\textbf{Paired-key evidence and model scores.}\phantomsection
\label{app:exp_key}
Table~\ref{tab:trigger_signal} holds each query fixed while changing the projection key. The cosine readings are identical within each pair, while key projections separate the paired keys on all three datasets. Mean learned scores distinguish benign, swapped, and wrong-key query types on FF++. Acceptance rates at calibrated cutoffs are reported separately in App.~\ref{app:joint_policy}.

\begin{table}[H]
\centering\footnotesize
\caption{\textbf{Key discrimination and learned record scores.} Top: correct-versus-wrong-key AUC $\uparrow$ with each image held fixed, using $1{,}175$, $636$, and $209$ test pairs on FF++, DFDC, and Celeb-DF; $0.500$ denotes chance. Bottom: mean relation and key scores over three seeds on FF++, with $1{,}175$ queries per type. Bottom arrows mark desired score directions.}
\label{tab:trigger_signal}
\begin{adjustbox}{max width=\linewidth}
\begin{tabular}{@{}llccc@{}}
\toprule
Quantity & Reading & \textsc{FF++} $\uparrow$ & \textsc{DFDC} $\uparrow$ & \textsc{Celeb-DF} v2 $\uparrow$ \\
\midrule
Cosine controls $H/C$ & Spectral agreement & 0.500 / 0.500 & 0.500 / 0.500 & 0.500 / 0.500 \\
Key projections $G_s/G_{\Delta}/G_z$ & keyed displacement & 1.000 / 1.000 / 1.000 & 1.000 / 1.000 / 1.000 & 1.000 / 1.000 / 1.000 \\
\midrule
Query & Decision input & $S_{\rm rel}$ & $S_{\rm key}$ & \\
\midrule
Benign query & both should agree & 0.999 $\uparrow$ & 0.999 $\uparrow$ & \\
Re-rendered after release & both should fail & 0.001 $\downarrow$ & 0.001 $\downarrow$ & \\
Wrong key & key must fail & 1.000 $\uparrow$ & 0.000 $\downarrow$ & \\
\bottomrule
\end{tabular}
\end{adjustbox}
\end{table}

\section{Channel Robustness and Spectral Verification}
\label{app:robust_table}

\par\medskip\noindent\textbf{Fixed enrollment reference.}\phantomsection
\label{app:robust_defect}
Enrollment fixes the template phase, original phase and amplitude, and encoded-release projection used for normalization. Processing acts only on the query. Thus, a lossless release has normalized key projection one, while its template agreement can be below one because the template precedes image quantization. Full-ring $H$ and sub-band $C_B$ both compare queries with this retained reference.

\subsection{Benign Channels Across Datasets}
\label{app:distort_canonical}

Table~\ref{tab:benign_perds} compares full-ring agreement with dataset-trained image-only detectors. Spectral separation varies with geometry, active-bin counts, and release strength; median crops are $256{\times}256$ on FF++, $379{\times}284$ on Celeb-DF, and $759{\times}685$ on DFDC. Distortion-augmented training strengthens image-only detection under several blur and codec conditions. The analytic sub-band alternative instead adjusts the frequencies used for comparison, as examined below.

\begin{table}[ht]
\centering
\small\setlength{\tabcolsep}{3pt}\renewcommand{\arraystretch}{1.15}

\caption{\textbf{Benign-channel robustness across datasets} (AUC $\uparrow$). Main-test AUCs averaged within each channel family. The $1-H$ rows use full-ring original-reference agreement; image-only detectors are trained on their respective datasets with distortion augmentation. Table~\ref{tab:robust_detail_main} reports the sub-band comparison.}
\label{tab:benign_perds}
\begin{tabular*}{\linewidth}{@{\extracolsep{\fill}}ll cccc c@{}}
\toprule
Dataset & Reading & JPEG $\uparrow$ & Blur $\uparrow$ & Codec $\uparrow$ & Pixel $\uparrow$ & Clean $\uparrow$ \\
\midrule
\multirow{2}{*}{\textsc{FF++}}
& $1-H$ (full ring) & 0.994 & 0.597 & 0.756 & 0.996 & 1.000 \\
& \dfh{}             & 0.972 & 0.773 & 0.776 & 0.976 & 0.986 \\
\addlinespace[1.5pt]

\multirow{2}{*}{\textsc{DFDC}}
& $1-H$ (full ring) & 0.799 & 0.568 & 0.589 & 0.879 & 0.977 \\
& \dfh{}             & 0.995 & 0.974 & 0.910 & 0.997 & 0.997 \\
\addlinespace[1.5pt]

\multirow{2}{*}{\textsc{Celeb-DF}}
& $1-H$ (full ring) & 0.736 & 0.533 & 0.550 & 0.809 & 0.978 \\
& \dfh{}             & 0.967 & 0.896 & 0.831 & 0.969 & 0.979 \\
\bottomrule
\end{tabular*}
\end{table}

These AUCs measure face-swap separation. Table~\ref{tab:trigger_signal} and App.~\ref{app:reference_controls} examine release-key evidence; App.~\ref{app:joint_policy} combines both forms of evidence in joint decisions.

\subsection{Preserved Frequencies and Verification Performance}
\label{app:readout_ceiling}
\label{app:readout}

\noindent\textbf{Reading preserved evidence.}
Stored original-image amplitudes emphasize strong enrolled coefficients, supporting separation under JPEG compression and pixelation. Blur and transcoding can still suppress initially strong upper-ring coefficients. A lower sub-band concentrates comparison on frequencies more likely to preserve the enrolled relation.

To evaluate the fixed $\kappa\le0.15$ setting on further sources, we select $100$ additional FF++ videos, excluding sources in the recorded experiment manifests and the pilot evaluation. Two predetermined frames per video yield $197$ valid release--SimSwap pairs after three face-detection failures, covering all $100$ videos; replacement identities come from training data. The cohort, generation settings, frequency cut, and thresholds are fixed before evaluation.

\noindent\textbf{Separation and operating points.}
Table~\ref{tab:fixed_band_sources} compares $C$ and $C_B$ on these paired queries. Both readings use first-percentile agreement thresholds from the same $1{,}144$ benign training releases, calibrated per channel. JPEG uses quality $85$; blur uses Gaussian radius $3$; reposting applies $0.75\times$ bicubic resizing and JPEG quality $70$ before restoring enrollment geometry; H.264 encodes each image independently at CRF $30$.

\begin{table}[ht]
\centering\small\setlength{\tabcolsep}{4pt}
\caption{\textbf{Face-swap detection after image processing.} Restricting spectral comparison to the fixed lower sub-band improves swap rejection under blur and reposting. After H.264 encoding, higher AUC does not translate into better rejection at the calibrated threshold. TPR and FPR report the percentages of swapped and benign queries rejected, respectively.}
\label{tab:fixed_band_sources}
\begin{tabular*}{\linewidth}{@{\extracolsep{\fill}}l rrr rrr@{}}
\toprule
& \multicolumn{3}{c}{Full ring $C$} & \multicolumn{3}{c}{Sub-band $C_B$} \\
\cmidrule(lr){2-4}\cmidrule(lr){5-7}
Channel & AUC $\uparrow$ & TPR $\uparrow$ & FPR $\downarrow$ & AUC $\uparrow$ & TPR $\uparrow$ & FPR $\downarrow$ \\
\midrule
Clean & 1.000 & 100.0 & 0.0 & 1.000 & 100.0 & 0.0 \\
JPEG $q85$ & 0.998 & 88.8 & 0.0 & 1.000 & 100.0 & 0.0 \\
Blur & 0.592 & 21.3 & 12.2 & 0.922 & 60.9 & 5.6 \\
Social repost & 0.971 & 51.3 & 0.0 & 1.000 & 92.9 & 0.0 \\
H.264 intra & 0.849 & 6.6 & 0.0 & 0.973 & 3.6 & 0.0 \\
\bottomrule
\end{tabular*}
\end{table}

Under blur, the sub-band raises AUC by $0.330$; $2{,}000$ paired source-video bootstrap resamples give a $95\%$ interval of $[0.299,0.353]$, conditional on the fixed training calibration. Together with improved reposting rejection, this supports frequency restriction for preserving useful face evidence. Operating-point transfer remains channel dependent: blur FPR is $5.6\%$ despite the $1\%$ training target, and H.264 AUC improves without higher TPR. Channel selection therefore depends on both benign acceptance and swap rejection at the deployed threshold.

\noindent\textbf{Relation and release evidence.}
Strong processing can push a released image's key score toward that of its unreleased original while retaining useful relation separation. Thus, frequency restriction alone does not establish version discrimination. The JPEG controls in Table~\ref{tab:reference_controls_main} and joint channel tests in App.~\ref{app:joint_policy} evaluate whether both forms of evidence remain usable.

\noindent\textbf{Cropping as a known mask.}
The verifier reconstructs a spatial reference from stored amplitudes and template phases, applies the known crop mask, and compares Fourier phase with the mean-centered query, zero-padded to enrollment geometry. Comparison uses amplitude weights over the full support and a $\pm2$-pixel placement search, calibrated on $1{,}144$ benign training crops. A center crop retaining $90\%$ of each dimension yields AUC $0.962$ over $200$ pairs, accepting $99\%$ of benign queries but rejecting only $6\%$ of swaps. Geometry correction therefore restores benign agreement more readily than reliable swap rejection.

\noindent\textbf{Clip-level aggregation.}
Averaging $C_B$ within an enrolled video reduces the influence of individual degraded frames. Calibration uses the first percentile of mean scores from $120$ benign training videos. On $120$ test videos, averaging accepts $98.3\%$ of benign clips and rejects $65\%$ of swaps, with AUC $0.986$, versus $5.7\%$ swap rejection per frame. This supports temporal aggregation for independently encoded images; inter-frame coding remains to be evaluated.

\section{Operating Points and Calibration}
\label{app:exp_detail}
We distinguish descriptive test-ROC operating points from training-calibrated decisions, then examine reuse of thresholds across datasets.

\subsection{Empirical Detection Operating Points}
\label{app:alarm_operating}
\label{app:operating_points}

To resolve differences hidden by near-ceiling AUC, Table~\ref{tab:operating_points} reports recall along the empirical test ROC at specified false-positive limits. At empirical FPR at most $10^{-3}$, all four additional engines reach recall $1.000$ over $1{,}175$ FF++ pairs each. StarGAN-v2 and SD inpainting also reach this operating point on all $636$ DFDC and $209$ Celeb-DF pairs, extending the observed separation to GAN- and diffusion-based manipulations across datasets. SimSwap recall is $0.999$ on \textsc{FF++} and $0.734$ on \textsc{DFDC}, despite AUCs of $1.000$ and $0.992$. These are descriptive test-ROC operating points, not results at independently calibrated deployment thresholds.

\begin{table}[h]
\centering\footnotesize\setlength{\tabcolsep}{4pt}
\caption{\textbf{Empirical operating points of the direct relation test.} Template mismatch $1-C_B$; $n$ counts enrolled pairs, with benign queries as negatives. Each TPR is the largest test-ROC value at or below the target empirical FPR. Targets below $1/n$ use the zero-false-positive segment. Thresholds are read from the test ROC rather than calibrated on training data.}
\label{tab:operating_points}
\begin{tabular*}{\linewidth}{@{\extracolsep{\fill}}ll r ccc@{}}
\toprule
Re-rendering & Dataset & $n$ & AUC $\uparrow$ & TPR@FPR $10^{-2}$ $\uparrow$ & TPR@FPR $10^{-3}$ $\uparrow$ \\
\midrule
StarGAN-v2 & \textsc{FF++} & 1{,}175 & 1.000 & 1.000 & 1.000 \\
StarGAN-v2 & \textsc{DFDC} & 636 & 1.000 & 1.000 & 1.000 \\
StarGAN-v2 & \textsc{Celeb-DF} & 209 & 1.000 & 1.000 & 1.000 \\
UniFace & \textsc{FF++} & 1{,}175 & 1.000 & 1.000 & 1.000 \\
InSwapper & \textsc{FF++} & 1{,}175 & 1.000 & 1.000 & 1.000 \\
SD inpainting & \textsc{FF++} & 1{,}175 & 1.000 & 1.000 & 1.000 \\
SD inpainting & \textsc{DFDC} & 636 & 1.000 & 1.000 & 1.000 \\
SD inpainting & \textsc{Celeb-DF} & 209 & 1.000 & 1.000 & 1.000 \\
SimSwap & \textsc{FF++} & 1{,}175 & 1.000 & 0.999 & 0.999 \\
SimSwap & \textsc{Celeb-DF} & 209 & 0.991 & 0.938 & 0.928 \\
SimSwap & \textsc{DFDC} & 636 & 0.992 & 0.868 & 0.734 \\
\bottomrule
\end{tabular*}
\end{table}

\subsection{Joint acceptance with training-only calibration}
\label{app:joint_policy}
Table~\ref{tab:joint_policy} evaluates the confirmed branch of Eq.~\eqref{eq:accept} on clean queries using three fixed FF++-trained record models. Calibration uses 1,144, 1,580, and 1,203 benign training faces on FF++, DFDC, and Celeb-DF, with source-video groups disjoint from testing. The original per-head first-percentile thresholds yield joint calibration rejection of $1.57\%$, $1.88\%$, and $1.88\%$, respectively; the matched comparison below instead controls their conjunction.

\noindent\textbf{Direct versus learned decisions.}
To compare the decision paths at a common budget, we evaluate full-ring $C$ and release-normalized $G_s$ against learned $S_{\rm rel},S_{\rm key}$ on the same queries. For each method, both thresholds use a common lower-tail quantile, chosen as the largest value between zero and $0.01$ for which joint rejection on benign training queries does not exceed $1\%$. This matches the calibration budget rather than each head's rejection rate; test queries do not select thresholds. Realized calibration rejection is $0.962\%$, $0.949\%$, and $0.998\%$ on FF++, DFDC, and Celeb-DF for both methods. Test benign rejection can differ. To assess joint decisions after delivery, we use the $200$ FF++ test pairs in Table~\ref{tab:roundtrip} and calibrate each policy on $100$ benign training releases per channel at $1\%$ joint rejection. This comparison covers benign and swapped queries with fixed learned weights. JPEG uses quality $85$; reposting applies $0.75\times$ bicubic resizing and JPEG quality $70$, followed by restoration to enrollment geometry.

\begin{table}[ht]
\centering\small\setlength{\tabcolsep}{3pt}\renewcommand{\arraystretch}{1.15}
\caption{\textbf{Joint verification with per-head and joint calibration budgets.} Under a matched joint budget, direct thresholds match FF++ swap rejection and reject more transferred swaps; the learned policy slightly increases benign acceptance on FF++ and DFDC. Rates are percentages; learned results are means over three FF++ seeds.}
\label{tab:joint_policy}
\begin{tabular*}{\linewidth}{@{\extracolsep{\fill}}llrrrr@{}}
\toprule
\shortstack{Dataset /\\channel} & Policy & $n$ & \shortstack{Benign\\confirmed $\uparrow$} & \shortstack{SimSwap\\confirmed $\downarrow$} & \shortstack{Wrong key\\confirmed $\downarrow$} \\
\midrule
\multicolumn{6}{@{}l}{\textbf{Original policy: $1\%$ calibration rejection per head}}\\
FF++ & Learned & 1,175 & $99.15\pm0.34$ & $0.09\pm0.00$ & $0.00\pm0.00$ \\
DFDC & Learned & 636 & $97.33\pm0.16$ & $2.04\pm0.00$ & $0.00\pm0.00$ \\
Celeb-DF & Learned & 209 & $100.00\pm0.00$ & $8.93\pm0.28$ & $0.00\pm0.00$ \\
\addlinespace[4pt]
\multicolumn{6}{@{}l}{\textbf{Matched comparison: at most $1\%$ joint calibration rejection}}\\
FF++ & Direct $C,G_s$ & 1,175 & 99.57 & 0.09 & 0.00 \\
& Learned & 1,175 & $99.69\pm0.26$ & $0.09\pm0.00$ & $0.00\pm0.00$ \\
DFDC & Direct $C,G_s$ & 636 & 97.64 & 0.79 & 0.00 \\
& Learned & 636 & $98.01\pm0.36$ & $2.31\pm0.18$ & $0.00\pm0.00$ \\
Celeb-DF & Direct $C,G_s$ & 209 & 100.00 & 3.83 & 0.00 \\
& Learned & 209 & $100.00\pm0.00$ & $14.04\pm0.73$ & $0.00\pm0.00$ \\
\addlinespace[4pt]
\multicolumn{6}{@{}l}{\textbf{FF++ delivery: $1\%$ joint calibration rejection per channel}}\\
Lossless & Direct $C,G_s$ & 200 & 99.00 & 0.00 & --- \\
& Learned & 200 & $95.00\pm0.00$ & $0.00\pm0.00$ & --- \\
JPEG $q85$ & Direct $C,G_s$ & 200 & 99.00 & 2.00 & --- \\
& Learned & 200 & $99.83\pm0.29$ & $9.17\pm0.29$ & --- \\
Repost & Direct $C,G_s$ & 200 & 97.50 & 15.50 & --- \\
& Learned & 200 & $99.50\pm0.00$ & $38.83\pm1.76$ & --- \\
\bottomrule
\end{tabular*}
\end{table}

\Needspace{7\baselineskip}
\noindent\textbf{Implication for the verifier.}
On these clean queries, direct thresholds provide competitive joint rejection without training, while the record model trades slightly higher benign acceptance for greater transferred-swap acceptance. Both reject every tested wrong-key query. Direct verification therefore provides the starting point when swap rejection is the priority. A similar tradeoff appears under JPEG and reposting: the learned policy retains more processed releases but also accepts more swaps. Both policies lose swap rejection after reposting, reinforcing the need to evaluate joint relation--key verification for each supported channel. These comparisons use full-ring direct scores and do not establish robustness to adaptive attacks.

\noindent\textbf{Other manipulations and operating-point sensitivity.}
We evaluate StarGAN-v2, UniFace, InSwapper, and SD inpainting on the same $1{,}175$ FF++ release pairs per method, using the $1{,}144$ benign training queries for calibration. At a $1\%$ joint calibration budget, both policies reject all four types of manipulation. Benign test rejection is $0.43\%$ for direct thresholds and $0.31\%$ for the learned policy, averaged over three seeds. Extending StarGAN-v2 and SD inpainting to all $636$ DFDC and $209$ Celeb-DF releases, both policies again reject every generated query using the existing target-dataset thresholds and fixed FF++ model weights. Benign rejection is $2.36\%$ for direct verification and $1.99\%$ for the learned policy on DFDC, and zero observed for both on Celeb-DF. These clean generative manipulations therefore support cross-dataset verification without showing a learned advantage in manipulation rejection.

A sweep of $101$ joint calibration budgets from $0\%$ to $10\%$ reveals an operating-point tradeoff under reposting. At a $10\%$ training budget, the learned policy reaches test TPR $93.83\%$ and FPR $2.33\%$; the direct policy at its $1\%$ budget reaches $84.50\%$ and $2.50\%$, respectively. This exploratory comparison shows that the learned path can be useful at other operating points, but deployment budget selection requires separate validation.

\subsection{Cross-Dataset Threshold Transfer}
\label{app:evt_transfer}

To distinguish weight transfer from threshold transfer, Table~\ref{tab:evt_transfer} holds the FF++-trained record model fixed and varies the calibration dataset. An empirical quantile fitted to one dataset's benign test scores is applied unchanged to the other datasets. Off-diagonal rejection rates range from zero observed to approximately $125$ times the target. Dataset and enrollment strength vary together for DFDC; the matrix does not isolate their effects or measure manipulation recall. These diagnostic test-set quantiles are separate from the training-only calibration used for the joint policy.

\begin{table}[H]
\centering\small\setlength{\tabcolsep}{3pt}\renewcommand{\arraystretch}{1.15}
\caption{\textbf{Cross-dataset threshold transfer.} Benign rejection rates from the fixed FF++ record model using $S_{\rm rel}$. Each row fits a test-score quantile on the marked dataset ($\checkmark$) and applies it unchanged to the others. Scores pool three seeds over $1{,}175$, $209$, and $636$ distinct benign faces, respectively. Entries measure empirical rejection, subject to finite-sample resolution; zeros denote no observed rejections.}
\label{tab:evt_transfer}
\begin{tabular*}{\linewidth}{@{\extracolsep{\fill}}ll ccc@{}}
\toprule
Fitted on & Target FPR & \multicolumn{3}{c}{FPR realized when reused on $\downarrow$} \\
\cmidrule(lr){3-5}
 & & \textsc{FF++} & \textsc{Celeb-DF} & \textsc{DFDC} \\
\midrule
\multirow{2}{*}{\textsc{FF++}}
 & $10^{-2}$ & \best{$1.0\!\times\!10^{-2}$\,$\checkmark$} & $0$ & $9.7\!\times\!10^{-2}$ \\
 & $10^{-3}$ & \best{$1.1\!\times\!10^{-3}$\,$\checkmark$} & $0$ & $6.4\!\times\!10^{-2}$ \\
\addlinespace[2pt]
\multirow{2}{*}{\textsc{Celeb-DF}}
 & $10^{-2}$ & $7.0\!\times\!10^{-2}$ & \best{$1.1\!\times\!10^{-2}$\,$\checkmark$} & $1.5\!\times\!10^{-1}$ \\
 & $10^{-3}$ & $3.9\!\times\!10^{-2}$ & \best{$1.6\!\times\!10^{-3}$\,$\checkmark$} & $1.3\!\times\!10^{-1}$ \\
\addlinespace[2pt]
\multirow{2}{*}{\textsc{DFDC}}
 & $10^{-2}$ & $0$ & $0$ & \best{$1.0\!\times\!10^{-2}$\,$\checkmark$} \\
 & $10^{-3}$ & $0$ & $0$ & \best{$1.0\!\times\!10^{-3}$\,$\checkmark$} \\
\bottomrule
\end{tabular*}
\end{table}

\subsection{Calibration Sensitivity and Finite-Sample Resolution}
\label{app:tail_resolution}
Fig.~\ref{fig:hyper}(a--b) separates the target rejection rate from the amount of calibration data. Relation recall remains $0.999$ across the tested targets, while increasing the calibration cohort from $25$ to $100$ faces reduces variability in benign rejection. Panel (c) reports the record model's key-loss sensitivity.

With $n$ independent negative faces, empirical FPR changes in increments of $1/n$; zero observed errors does not establish a lower population error rate. The paired-key tests contain 1,175, 636, and 209 negative faces, and repeated training seeds do not increase these counts. All joint-policy cutoffs use empirical training quantiles without tail extrapolation. Because either failed criterion prevents confirmation, a joint rejection budget must be calibrated on the combined decision, as in the matched comparison above.

\begin{figure}[htbp]
\centering

\includegraphics[width=0.92\linewidth]{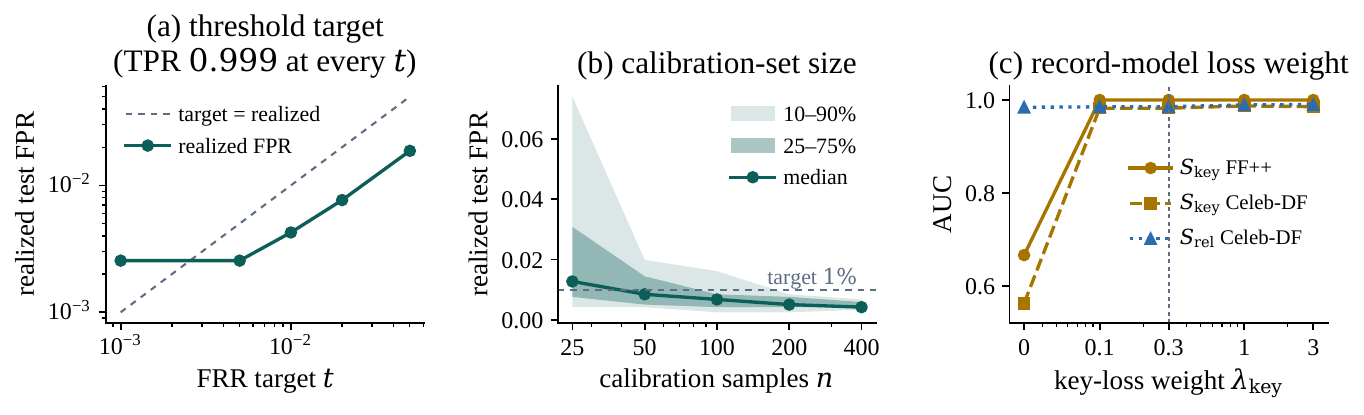}
\caption{\textbf{Calibration and record-model sensitivity} (\textsc{FF++} unless noted).
(a) Target benign false-rejection rate (FRR), denoted $t$ in this panel; (b) number $n$ of calibration samples, with shaded percentile ranges; and (c) key-loss weight $\lambda_{\rm key}$.
}
\label{fig:hyper}

\end{figure}

\section{Sensitivity of the Release Mechanism}
\label{app:sensitivity}

To isolate the release mechanism from learned compensation, the strength and ring sweeps use analytic readings with no fitted model weights. The key-loss sweep separately evaluates the record model. For every
thresholded quantity, the operating point is recalibrated independently at each setting using benign-training quantiles ($n{=}200$ training assets per dataset and setting). The same
recorded per-asset SimSwap replacement identities are reused throughout so that settings remain directly comparable.

Fig.~\ref{fig:sensitivity} shows stable relation and wrong-key separation over the tested strength and ring ranges, with positive gaps between JPEG-processed releases and unreleased originals. Ring effects differ across datasets, where crop geometry and strength also vary. Table~\ref{tab:annulus_sens} details the FF++ results; Fig.~\ref{fig:sensitivity_m} measures the cost of additional wrong keys.

\begin{figure}[p]
\centering
\includegraphics{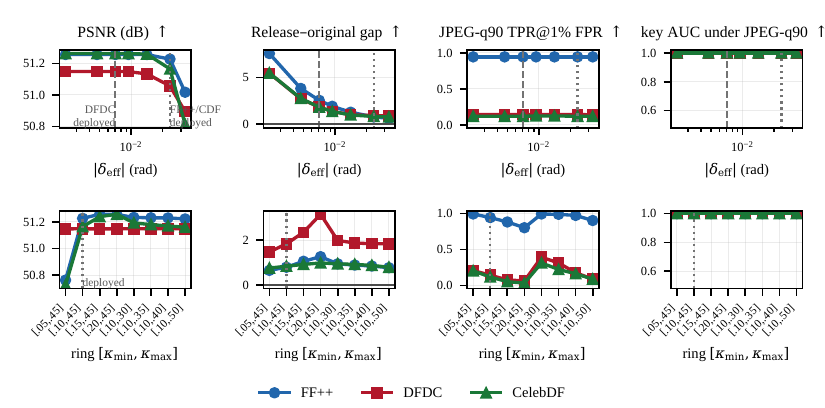}
\captionof{figure}{\textbf{Effects of enrollment strength and frequency-ring location.}
Varying the embedding strength (top) and ring location (bottom) reveals their effects on visual fidelity, release--original separation, and verification performance. Columns show PSNR, the key-projection gap between releases and originals, relation TPR after JPEG-q90, and correct-vs-wrong-key AUC. TPR uses thresholds calibrated to $1\%$ benign training rejection. Dashed and dotted lines indicate the settings used in our experiments.}
\label{fig:sensitivity}

\par\vspace{10pt}

\includegraphics{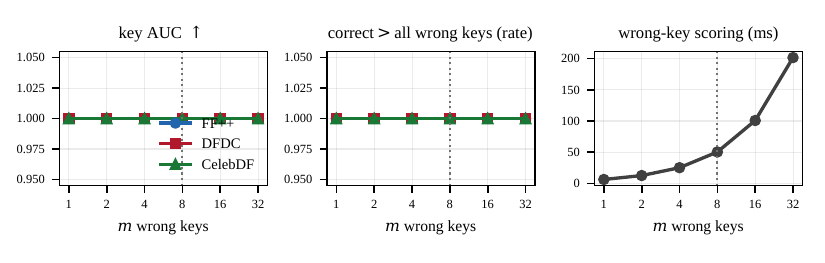}
\captionof{figure}{\textbf{Wrong-key count.}
Key separation remains stable from $m{=}1$ to $32$, while scoring cost grows approximately
linearly. The evaluated configuration uses $m{=}8$.}
\label{fig:sensitivity_m}

\end{figure}

\begin{table}[htbp]
\centering
\footnotesize
\setlength{\tabcolsep}{3pt}
\renewcommand{\arraystretch}{1.15}
\caption{\textbf{Effect of ring location on visual quality and verification} (FF++). The tested rings preserve high visual fidelity and clean face-swap separation, while JPEG robustness and the release--original key-projection gap vary with ring location. Ring bounds $[\kappa_{\min},\kappa_{\max}]$ are the inner and outer frequency radii, normalized so that zero is DC and one is the Fourier-plane corner. Strength is fixed at the deployed value, with calibration repeated for each setting.}
\label{tab:annulus_sens}
\begin{tabular*}{\linewidth}{@{\extracolsep{\fill}}l c cc cc@{}}
\toprule
& & \multicolumn{2}{c}{Face-swap detection AUC $\uparrow$} & \multicolumn{2}{c}{Key evidence} \\
\cmidrule(lr){3-4}\cmidrule(lr){5-6}
\shortstack{Ring\\$[\kappa_{\min},\kappa_{\max}]$} & \shortstack{PSNR\\(dB) $\uparrow$} & Clean & JPEG-q90 & \shortstack{Release--original\\key-projection gap $\uparrow$} & \shortstack{Correct vs. wrong key\\AUC $\uparrow$} \\
\midrule
$[0.05,\,0.45]$ & 50.8 & 1.000 & 1.000 & 0.65 & 1.000 \\
$[0.10,\,0.45]$ & 51.2 & 1.000 & 0.998 & 0.79 & 1.000 \\
$[0.15,\,0.45]$ & 51.3 & 1.000 & 0.995 & 1.06 & 1.000 \\
$[0.20,\,0.45]$ & 51.3 & 1.000 & 0.990 & 1.26 & 1.000 \\
$[0.10,\,0.30]$ & 51.2 & 1.000 & 1.000 & 0.94 & 1.000 \\
$[0.10,\,0.35]$ & 51.2 & 1.000 & 0.999 & 0.89 & 1.000 \\
$[0.10,\,0.40]$ & 51.2 & 1.000 & 0.999 & 0.84 & 1.000 \\
$[0.10,\,0.50]$ & 51.2 & 1.000 & 0.997 & 0.77 & 1.000 \\
\bottomrule
\end{tabular*}
\end{table}

\noindent\textbf{Key-projection gap between releases and originals.}
This quantity subtracts the highest key projection among unreleased originals from the first percentile of projections among JPEG-q90 releases. Thus, a positive gap means that at least $99\%$ of the evaluated releases score above every evaluated original, leaving room for a threshold that separates them. It measures separation from unreleased originals, whereas the final AUC column measures separation from wrong keys.

For the record model, removing the key loss reduces key separation toward chance; positive weights from $0.1$ to $3$ yield similar separation while leaving the relation head largely unchanged. Fig.~\ref{fig:hyper}(c) reports this diagnostic.

\section{Deployment Interface and Cost}\label{app:deploy}

The publisher retains the spectral record and the metadata needed to associate a query with its enrolled face. We measure the cost of enrollment and verification with that association already resolved; App.~\ref{app:limits} discusses operational conditions and record protection.

\par\medskip\noindent\textbf{Runtime and storage.}\phantomsection
\label{app:deploy_cost}
We benchmark the accompanying implementation on 20 native $256^2$ FF++ test face crops, after three warm-up iterations, using an Intel Xeon Gold 6248R CPU with one PyTorch thread. Enrollment, including the RGB edit, uint8 conversion, and nine release-normalization scalars, takes $66.7$\,ms per crop; this excludes quality-gate checks and file serialization. Direct $C_B$ comparison with an in-memory record takes $5.9$\,ms. Loading the query and NPZ, extracting all 75 features with eight alternative keys, and running the approximately $10.3$k-parameter MLP takes $149.9$\,ms. These stages exclude face detection, alignment, and record lookup.

Table~\ref{tab:deploy_cost} reports the CPU verification workloads alongside peer measurements on an A5000. Hardware and processing paths are specified separately for each timing.

Storage depends on the retained support and numerical precision. On 150 FF++ faces, the codec experiment serializes phase, amplitude, and support-mask payloads: float16 arrays require 69.6\,KiB, 8-bit phase and 4-bit log-amplitude require 28.0\,KiB, and restricting the quantized payload to $\kappa\le0.15$ requires 2.3\,KiB. These payload variants preserve detection and correct-versus-wrong-key AUCs to three decimals on the tested assets; this codec comparison measures ranking performance.

These codec payloads exclude dimensions, keys, strengths, geometry, calibration metadata, and nine float64 release projections (72 bytes). The sub-band codec retains evidence for restricted comparison, whereas the model needs full-ring features. Runtime measurements use the $356$\,KiB research NPZ or its in-memory arrays; compact-codec loading and direct full-ring joint latency remain unmeasured.

The optional image-only review signal has $91$\,MiB of weights and used $185$\,MiB of GPU memory with $6.9$\,ms inference in the A5000 benchmark. Its cost is separate from the CPU record verifier.

\begin{table}[tb]
\centering\footnotesize\setlength{\tabcolsep}{3pt}\renewcommand{\arraystretch}{1.08}
\caption{\textbf{Runtime and records for the evaluated verification paths.} Means over 20 $256^2$ crops after warm-up. Peer timings use an A5000; our timings ($^\ddagger$) use a Xeon Gold 6248R CPU and the listed records. CPU timings cover relation-only comparison or the learned path, excluding the image-only detector. Compact codecs are evaluated separately.}
\label{tab:deploy_cost}
\begin{tabular*}{\linewidth}{@{\extracolsep{\fill}}l rrrr l@{}}
\toprule
& \multicolumn{2}{c}{Time (ms) $\downarrow$} & \multicolumn{2}{c}{Memory (MiB) $\downarrow$} & Record used \\
\cmidrule(lr){2-3}\cmidrule(lr){4-5}
Method & Embed & Verify & Weights & GPU & Spectral record \\
\midrule
StegaStamp   & 9.3 & 2.6 & 208 & 375 & none \\
FaceSigns   & 4.4 & 1.7 & 72 & 102 & none \\
SepMark             & 12.5 & 17.8 & 550 & 653 & none \\
WAM                 & 14.9 & 17.0 & 360 & 450 & none \\
KAD-Net              & 17.0 & 12.5 & 563 & 650 & none \\
VINE-R                 & 79.3 & 10.7 & 3833 & 4569 & none \\
LIDMark          & 6.0 & 5.9 & 68 & 93 & none \\
\midrule
Ours, relation $C_B$ & 66.7$^\ddagger$ & 5.9$^\ddagger$ & 0 & 0 & in-memory arrays \\
Ours, record model & 66.7$^\ddagger$ & 149.9$^\ddagger$ & 0.04 & 0 & $356$\,KiB NPZ \\
\bottomrule
\end{tabular*}
\end{table}

\section{Verification Limits and Record Privacy}
\label{app:limits}
\label{app:not_claimed}
This section examines how regeneration and targeted attacks affect verification, then discusses operating conditions and the protection of private records.

\subsection{Regeneration and Adaptive Attacks}
\label{app:stress_summary}

An attacker may seek to make a manipulated face pass verification or cause a genuine release to be rejected. To explore these deployment risks, we examine two routes: diffusion regeneration that may erase release-specific evidence, and targeted perturbations that directly alter verification scores. These tests probe whether the evidence remains useful when processing is deliberately chosen to undermine verification.

\noindent\textbf{Experimental settings.}
We evaluate $200$ FF++ test pairs in Table~\ref{tab:boundary_adaptive}. SD-Turbo regenerates both the release and its re-rendering at strengths $0.2$, $0.3$, and $0.4$, either over the whole crop or only the face region, without accessing the record. Strength controls how much noise is added before denoising: larger values allow more extensive regeneration and greater departure from the input. These values are diffusion settings, not percentages of pixels changed. Targeted attacks use an $L_\infty$ budget of $\epsilon=8/255$ on images in $[0,1]$. PGD~\citep{Madry2018PGD} takes $40$ steps of $\epsilon/8$ against the listed score; NES~\citep{Ilyas2018NES} uses $24$ score queries followed by one signed step. Transfer attacks use a DFDC-trained detector. Because these attacks lower suspiciousness for swaps and raise it for genuine releases, the resulting AUC measures ranking under this two-sided attack, rather than swap-evasion success at a fixed threshold.

\noindent\textbf{Scores and attack targets.}
The three columns measure template mismatch $1-C_B$, image-only score $S_{\rm df}$, and record-model score $S_{\rm fake}$. Record-model AUCs average three fixed FF++ checkpoints, using pre-sigmoid logits to avoid numerical saturation. Attacks optimize template or image-only scores; the record-model column measures attack transfer to a learned head that is not directly targeted. AUC $0.500$ denotes chance and $0.000$ an inverted ranking.

\begin{table}[H]
\centering\small\setlength{\tabcolsep}{3pt}\renewcommand{\arraystretch}{1.18}
\caption{\textbf{Detection under regeneration and targeted attacks} (AUC $\uparrow$). Stronger regeneration reduces separation, while template-targeted PGD also degrades the learned record model. Ranges cover three regeneration strengths, from weaker ($0.2$) to stronger ($0.4$).}
\label{tab:boundary_adaptive}
\begin{tabular*}{\linewidth}{@{\extracolsep{\fill}}ll ccc@{}}
\toprule
& & \multicolumn{3}{c}{Face-swap detection AUC $\uparrow$} \\
\cmidrule(lr){3-5}
Method & Setting & \shortstack{Template mismatch\\$1-C_B$} & \shortstack{Image-only\\$S_{\rm df}$} & \shortstack{Record model\\$S_{\rm fake}$} \\
\midrule
No attack & Original test queries & 1.000 & 0.996 & 0.999 \\
\addlinespace[4pt]
\multicolumn{5}{@{}l}{\textbf{Diffusion regeneration: SD-Turbo}}\\
SD-Turbo, $0.2$--$0.4$ & Whole crop & 0.665--0.893 & 0.835--0.973 & 0.725--0.893 \\
SD-Turbo, $0.2$--$0.4$ & Face region only & 0.783--0.954 & 0.873--0.988 & 0.771--0.947 \\
\addlinespace[4pt]
\multicolumn{5}{@{}l}{\textbf{Adaptive attacks ($\epsilon=8/255$)}}\\
NES (24 queries) & Score-only access & 0.971 & 0.992 & 0.922 \\
Transfer attack & DFDC-trained detector & 0.969 & 0.000 & 0.936 \\
White-box PGD & Image-only $S_{\rm df}$ & 0.964 & 0.000 & 0.926 \\
White-box PGD & Template $1-C_B$ & 0.000 & 0.994 & 0.081 \\
White-box PGD & $1-C_B$ and $S_{\rm df}$ & 0.124 & 0.000 & 0.797 \\
\bottomrule
\end{tabular*}
\end{table}

\noindent\textbf{Findings.}
The verification components provide complementary evidence but share vulnerabilities. Attacks against the image-only detector leave spectral comparison informative, whereas attacks against the template can also undermine the record model because it uses related spectral features. Adding a learned head therefore does not by itself protect the system from manipulation of its underlying evidence. Stronger diffusion regeneration also weakens separation, indicating that substantially regenerated copies require separate channel evaluation before automated verification. For deployment, these findings motivate testing attacks across verification components. Accordingly, processing identified as unsupported should yield \emph{unavailable}. Successful attacks, however, can satisfy the verification criteria without triggering human review. The component AUCs identify failure modes; evaluating their effect on final verdicts additionally requires testing the complete relation--key policy.

\noindent\textbf{Future directions.}
These results motivate exploring evidence distributed across frequency bands and spatial scales, with the aim of preserving discrimination under regeneration and delivery loss. Candidate designs should be evaluated against attacks targeting the complete relation--key policy, measuring both benign acceptance and manipulation rejection.

\subsection{Phase-Ring Transplantation}
\label{app:ring_transplantation}
An attacker holding the released image can copy its ring phase into a re-rendered face while retaining the latter's magnitudes. Phase is copied from the public release rather than recovered from a private template or key. However, the evaluated attack is supplied with the enrolled active-bin mask; its access assumption therefore includes support information from the record. Table~\ref{tab:ring_transplant} evaluates its effect on template-based and image-only detection.

\noindent\textbf{Evaluation settings.}
We test $200$ pairs per dataset, with thresholds calibrated for $1\%$ FPR on that dataset's benign training queries. AUC compares swapped queries with paired benign releases; TPR and FPR use the original thresholds, with the same benign baseline before and after transplantation. The calibration target does not guarantee a $1\%$ test FPR. The image-only detector uses fixed FF++ weights. ``Either flag'' means that the template or image-only test flags the query; this binary OR rule has no continuous score for AUC and is separate from the relation--key decision policy.

\begin{table}[H]
\centering\small\setlength{\tabcolsep}{2pt}\renewcommand{\arraystretch}{1.18}
\caption{\textbf{Detection before and after phase-ring transplantation.} Transplantation reduces template-score separation and detection at the original thresholds. FF++ retains substantial ranking ability despite low TPR, whereas DFDC approaches chance. TPR and FPR are percentages.}
\label{tab:ring_transplant}
\begin{tabular*}{\linewidth}{@{\extracolsep{\fill}}l ccc ccc cc@{}}
\toprule
& \multicolumn{3}{c}{Template mismatch $1-C_B$} & \multicolumn{3}{c}{Image-only $S_{\rm df}$} & \multicolumn{2}{c}{Either flag (OR)} \\
\cmidrule(lr){2-4}\cmidrule(lr){5-7}\cmidrule(lr){8-9}
Dataset & AUC $\uparrow$ & TPR $\uparrow$ & FPR $\downarrow$ & AUC $\uparrow$ & TPR $\uparrow$ & FPR $\downarrow$ & TPR $\uparrow$ & FPR $\downarrow$ \\
\midrule
\multicolumn{9}{@{}l}{\textbf{Before transplantation}}\\
FF++ & 1.000 & 99.5 & 2.0 & 0.996 & 99.5 & 14.0 & 99.5 & 15.5 \\
Celeb-DF & 0.990 & 95.0 & 1.5 & 0.687 & 18.0 & 4.5 & 95.0 & 6.0 \\
DFDC & 0.985 & 98.5 & 13.0 & 0.751 & 34.5 & 7.0 & 99.0 & 14.5 \\
\addlinespace[4pt]
\multicolumn{9}{@{}l}{\textbf{After transplantation}}\\
FF++ & 0.892 & 6.0 & 2.0 & 0.995 & 99.5 & 14.0 & 99.5 & 15.5 \\
Celeb-DF & 0.717 & 16.5 & 1.5 & 0.613 & 14.5 & 4.5 & 26.5 & 6.0 \\
DFDC & 0.579 & 15.0 & 13.0 & 0.724 & 30.0 & 7.0 & 32.0 & 14.5 \\
\bottomrule
\end{tabular*}
\end{table}
\noindent\textbf{Findings.}
Copied release phase can satisfy spectral criteria despite changed facial appearance. The learned joint policy confirms $84\%$ of attacked FF++ queries in the three-seed mean-score evaluation, so key evidence does not prevent this bypass. Image-only detection helps but weakens across datasets; stored-amplitude comparison does not restore separation. Bypasses can receive a \emph{confirmed} verdict and escape human review.

\par\medskip\Needspace{7\baselineskip}\noindent\textbf{Future directions.}
To resist evidence transplantation, a complementary direction is to couple the release signal to local facial content and verify consistency across facial regions. Evaluation should measure whether these dependencies improve rejection of transplanted evidence while preserving acceptance under supported processing; their effectiveness remains to be established.

\subsection{Verification Conditions and Interpretation}
Channel evaluation must assess both benign acceptance and manipulation rejection: processing can erase discriminative evidence while releases still pass. Failed comparisons identify discrepancies, not causes; unsupported processing or unresolved geometry requires an unavailable outcome.

Evaluation uses individual face crops with known enrollment geometry; scores do not localize facial edits. End-to-end multi-face detection and association, public signed receipts, and C2PA integration remain unevaluated. Human-review evaluation is not reported due to confidentiality obligations. The optional image-only review flag supplies separate evidence under Eq.~\eqref{eq:accept}.
\label{app:deployment_interpretation}
\label{app:boundary_reporting}

\par\medskip\noindent\textbf{Future directions.}
Deployment studies could combine channel-support and face-association confidence to identify when to return \emph{unavailable}. End-to-end evaluation should measure verification errors, automated coverage, and reviewer accuracy and workload, including attacks that are confirmed and escape referral.

\subsection{Privacy of Spectral Records}
\label{app:metadata_privacy}
Phase-only reconstruction reveals coarse facial structure (Fig.~\ref{fig:template}), making private records sensitive image-derived data. Access to templates, keys, and scores can therefore expose information beyond the reported verification outcome.

\par\medskip\noindent\textbf{Future directions.}
Future work could reduce recoverable facial information while retaining relation and key evidence, jointly evaluating verification performance, reconstruction, and identity leakage. Reporting protocols should authenticate outcomes with minimal disclosure and assess leakage through repeated verification requests.

\newpage

\end{document}